\documentclass[12pt,journal,onecolumn]{IEEEtran}
\usepackage{amsmath, bm}
\usepackage{amsfonts}
\usepackage{amssymb}
\usepackage{theorem}
\usepackage{graphicx} 
\usepackage{xcolor}
\usepackage{cleveref}


\date{}
\usepackage{xcolor}
\newcommand{\comment}[1]{}
\newtheorem{theorem}{Theorem}
\newtheorem{lemma}[theorem]{Lemma}

\newtheorem{corollary}[theorem]{Corollary}
\newtheorem{definition}{Definition}
\newtheorem{remark}{Remark}


\begin{document}
	 	\IEEEoverridecommandlockouts
	\title{Second-Order Asymptotics of Hoeffding-Like Hypothesis Tests}
	\author{
		\IEEEauthorblockN{K. V. Harsha, Jithin~Ravi,\IEEEauthorrefmark{2} and Tobias~Koch\IEEEauthorrefmark{3}}\\
		\IEEEauthorblockA{\IEEEauthorrefmark{2}%
			Indian Institute of Technology Kharagpur, India\\
			\IEEEauthorrefmark{3}%
			Universidad Carlos III de Madrid, Legan\'es, Spain and 	Gregorio Mara\~n\'on Health Research Institute, Madrid, Spain.\\
			Emails: {kv.harsha1989@gmail.com, jithin@ece.iitkgp.ac.in, koch@tsc.uc3m.es}}
		\thanks{Part of this work was done while K.~V.~Harsha and J.~Ravi were with the Signal Theory and Communications Department,
			Universidad Carlos III de Madrid, Legan\'es, Spain. K.~V.~Harsha, J.~Ravi, and T.~Koch have received funding from the European Research Council (ERC) under the European Union's Horizon 2020 research and innovation programme (Grant No.~714161). T.~Koch has further received funding from the Spanish Ministerio de Ciencia e Innovaci\'on under Grant PID2020-116683GB-C21~/~AEI~/~10.13039/501100011033.}
	}
\maketitle
		\begin{abstract}
		We consider a  binary statistical hypothesis testing problem, where $n$ independent and identically distributed random variables $Z^n$ are either distributed according to the null hypothesis $P$ or the alternate hypothesis $Q$, and only $P$ is known. For this problem, a well-known test is the Hoeffding test, which accepts $P$
	if the Kullback-Leibler (KL) divergence between the empirical distribution of $Z^n$  and $P$ is below some threshold. In this paper, we consider Hoeffding-like tests, where the KL divergence is replaced by other divergences, and  characterize, for a large class of divergences, the first and second-order terms of the type-II error for a fixed type-I error. Since the considered class includes the KL divergence, we obtain the second-order term of the Hoeffiding test as a special case.
	\end{abstract}
\section{Introduction}
\label{sec:intro}
Statistical hypothesis testing is known to have applications in areas such as  information theory, signal processing, and machine learning. The most simple form of hypothesis testing is binary hypothesis testing, where the goal is to determine the distribution of a random variable $Z$ between a null hypothesis $P$ and an alternate hypothesis $Q$.  There can be two types of errors in a binary hypothesis testing: The type-I error is the probability of declaring the hypothesis as $Q$ when the true distribution  is $P$. The type-II error is the probability of declaring the hypothesis as $P$ when the true distribution is $Q$. In general, we are interested in analyzing the trade-off between these two types of errors for a given test. This test may have full or only partial access to the distributions $P$ and $Q$.

When both $P$ and $Q$ are known, the likelihood ratio test (also known as Neyman-Pearson test~\cite{NP33}) achieves the optimal error trade-off between type-I and type-II errors. The Neyman-Pearson test is also investigated in an asymptotic setting, where one observes  $n$ independent copies of $Z$ and the errors are analyzed asymptotically as $n$ tends to infinity.  In this case, we are often interested in the behaviour of the type-II error $\beta_n$ for a fixed type-I error $\alpha_n$. It is known that for $\alpha_n \leq \epsilon, \; \epsilon \in (0,1)$,  the corresponding type-II error satisfies \cite[Prop. 2.3]{T214} 
\begin{eqnarray}
	-\ln \beta_{n}  &=& nD(P \| Q) -  \sqrt{nV(P \| Q) }  Q^{-1}(\epsilon) + o(\sqrt{n}), \quad \text{as} \quad n \rightarrow \infty \label{eq:NP}
\end{eqnarray}
where  $D(P \| Q)$  is the Kullback-Leibler (KL) divergence between $P$ and $Q$, defined as
\begin{eqnarray}
	D(P \| Q)  \triangleq \sum_{i=1}^{k} P_{i} \ln \frac{P_{i}}{Q_{i}}, \label{eq:kl}
\end{eqnarray}
$Q^{-1}(\cdot)$ denotes the inverse  of the tail probability  of the standard Normal distribution, 
\begin{eqnarray*}
	V(P \| Q) \triangleq \sum_{i=1}^{k} P_{i} \left[ \left( \ln  \frac{P_{i}}{Q_{i}} -D(P \|Q) \right)^{2}  \right], 
\end{eqnarray*}
and $o(\sqrt{n})$ denotes a term that grows more slowly than $\sqrt{n}$ as $n \to \infty$. Thus the first-order term of $-\ln \beta_n$, defined as $\lim_{n\rightarrow \infty} \frac{-\ln \beta_n}{n}$ and sometimes referred to as the \textit{error exponent}, is given by $D(P\|Q)$.  The second-order term, defined as $\frac{-\ln \beta_n -n D(P\|Q)} {\sqrt{n}}$, is given by $-\sqrt{V(P \| Q) }  Q^{-1}(\epsilon)$.

The case where the test has only partial access to the distributions $P$ and $Q$ is generally studied under the name of {\em composite hypothesis testing}. Here, the assumption is that $P$ and $Q$ belong to certain classes of distributions denoted by $\mathcal{P}$ and $\mathcal{Q}$, respectively. Several special cases of composite hypothesis testing have been investigated in the literature. Two such cases are:  1) $P$ is known and $Q$ can be any distributions but $P$; 2)  $P$ is known and $Q$ belongs to a special class $\mathcal{Q}$. A test proposed by Hoeffding \cite{H65}, known as the Hoeffding test, is suitable for these two cases. In the Hoeffding test, the null hypothesis $P$ is accepted if the KL divergence between the type $t_{Z^{n}}$ (the empirical distribution) of the observations $Z^{n}=(Z_{1},\cdots,Z_{n})$ and $P$ is below some threshold $c$. Otherwise, the alternative hypothesis is accepted. Mathematically, 
\begin{align*}
	\text{if}	\; D(t_{Z^{n}} \|P) < c, \text{ then accept } H_0; \text{ otherwise accept } H_1. 
\end{align*}
In~\cite[Th.~III.2]{C98}, it  was shown that  the Hoeffding test achieves the same first-order term of $-\ln \beta_n$ as the Neyman-Pearson test. Consequently, not having access to the distribution of the alternate hypothesis does not affect the first-order term. To analyze the second-order term of $-\ln \beta_n$ in the above two cases of composite hypothesis testing, Watanabe proposed a test that is second-order optimal in some sense~\cite{Watanabe18}.

The  Hoeffding test does not require side information about
the special class $\mathcal{Q}$ to which $Q$ belongs. It is therefore perhaps not surprising that tests that have  knowledge of $Q$ (case 2 of composite hypothesis testing) may outperform the Hoeffding test. Such tests include the generalized likelihood ratio test (GLRT)~\cite{Zeitouni92} and test via mismatched divergence~\cite{Unnikrishnan11}. For example, it has been observed through simulations that the test via mismatched divergence performs better than the Hoeffding test \cite{Unnikrishnan11}. However, an analytical comparison between the Hoeffding test and other tests is missing since refined  asymptotics of the Hoeffding are not available.

In this paper, we wish to understand the behavior of the second-order term of the Hoeffding test as well as of Hoeffding-like tests where the KL divergence is replaced by other divergences,  in the following referred to  as \textit{divergence test}.  For a large class of divergences, we demonstrate that the second-order term is given by$-\sqrt{V(P \| Q) Q^{-1}_{\chi^{2},k-1}(\epsilon) }$, where $ Q^{-1}_{\chi^{2},k-1}(\cdot)$ denotes the inverse  of the tail probability of the chi-squared distribution with $k-1$ degrees of freedom.  This class of divergences includes the KL divergence, so we obtain the second-order term of the Hoeffding test, as a special case.

The rest of this paper is organized as follows. Section~\ref{sec:problem} presents notations and the problem formulation. Section~\ref{sec:results} discusses our main results. Section~\ref{Sec_Pf_Main} presents the proof of our results. Section~\ref{sec:conclusion} concludes the paper with a summary and discussion of our results. 

\section{Notations and  Problem Formulation}\label{sec:problem}
\subsection{Notations}
Let $f(x)$ and $g(x)$ be  two real-valued functions.
For $a \in \Bbb R \cup \{\infty\}$, we write  $f(x)=O(g(x))$ as $x \rightarrow a$ if $\limsup_{x \to a} \frac{|f(x)|}{|g(x)|} < \infty$.  Similarly, we  write  $f(x)=o(g(x))$ as $x \rightarrow a$ if $\lim_{x \to a} \frac{|f(x)|}{|g(x)|} = 0$. Finally, we write $f(x) = \Theta(g(x))$ as $x \to a$	 if $f(x)=O(g(x))$ and $\liminf_{x \to a} \frac{|f(x)|}{|g(x)|} > 0$.

Let $\{a_{n}\}$ be a sequence of positive real numbers. We say that a sequence of random variables is  $X_{n}=o_{p}(a_{n})$ if for every $\epsilon >0$, we have 
\begin{eqnarray*}
	\lim_{n \rightarrow \infty} P\left(   \left|  \frac{X_{n}}{a_{n}}\right| >\epsilon \right) =0.
\end{eqnarray*}
Similarly, we say that $X_{n}=O_{p}(a_{n})$ if for every $\epsilon >0$ there exist $M_{\epsilon}>0$ and $N_{\epsilon} \in \Bbb N$ such that 
\begin{eqnarray*}
	P\left(   \left|  \frac{X_{n}}{a_{n}}\right|  >M_{\epsilon} \right) \leq \epsilon,\quad  n \geq N_{\epsilon}.
\end{eqnarray*}
\subsection{ Divergence and Divergence Test}

We consider a divergence test on a random variable $Z$ that takes value in a discrete set  $\mathcal{Z}=\{ a_{1}, \cdots, a_{k}\}$, where $ k \geq 2$. We denote the probability distribution function of $Z$ by a $k$-length vector $P=(P_{1},\cdots,P_{k})^{T}$, where
\begin{eqnarray*}
	P_{i} \triangleq \text{Pr}\{Z=a_{i}\}, \quad i=1,\cdots,k.
\end{eqnarray*}
Let 
\begin{align*}
	\bar{\mathcal{P}}(\mathcal{Z}) \triangleq \left\lbrace P=(P_{1}, \cdots, P_{k})^{T} \; | \; P_{i} \geq 0, \; i=1,\cdots,k  \mbox{ and }  \sum_{i=1}^{k} P_{i}=1 \right\rbrace 
\end{align*}
be the set of all such probability distributions.  The set of all positive  probability distributions with $P _{i} >0, \; i=1,\cdots,k $ on $\mathcal{Z}$ is denoted by  $\mathcal{P}(\mathcal{Z})$.  Let $Z^{n}=(Z_{1}, \cdots, Z_{n})$  denote an $n$-length  i.i.d.  sequence   of  $Z$. 
By denoting the realization of $Z^n$ by $z^n$, we define  the type distribution $t_{z^{n}}=(t_{z^{n}}(a_{1}), \cdots,  t_{z^{n}}(a_{k}))^{T}$ of the sequence $z^{n}$ as
\begin{eqnarray*}
	(t_{z^{n}}(a_{1}), \cdots,  t_{z^{n}}(a_{k}))^{T} \triangleq \left( \frac{N_{a_{1}| z^{n}}}{n}, \cdots,   \frac{N_{a_{k}| z^{n}}}{n}   \right) ^{T}
\end{eqnarray*}
where $N_{a_{i} | z^{n}}, \; i=1, \cdots, k$  denotes the number of times $a_{i}$ occurs in the sequence $z^{n}$.  Since $\sum_{i=1}^{k}P_{i}=1$, a probability vector $ P  = (P_{1}, \cdots, P_{k})^{T} \in	\mathcal{P}(\mathcal{Z})$ can be represented by any $(k-1)$ components of $P$. With this representation, the set $ \mathcal{P}(\mathcal{Z})$ can be identified as a $(k-1)$-dimensional manifold; see \cite{AB216}.  In this paper, we choose the coordinate system of the probability distribution $ P  = (P_{1}, \cdots, P_{k})^{T}$ as the first $(k-1)$ components  denoted by $\mathbf{P}=(P_{1}, \cdots, P_{k-1})^{T} $.

Given any two probability distributions $P,Q \in \mathcal{P}(\mathcal{Z})$, one can define a  non-negative binary function $\tilde{D}(P \| Q) $ called a {\em divergence} which represents  a measure of discrepancy between them. A divergence  is not necessarily  symmetric in its arguments and also need not satisfy the triangle inequality;  see \cite{AB216, E85} for more details. Mathematically, a divergence is defined as follows.
\begin{definition}
	Consider two points $P$ and $Q$ in $ \mathcal{P}(\mathcal{Z})$ whose coordinates are $\mathbf{P}=(P_{1}, \cdots, P_{k-1})^{T} $ and $\mathbf{Q}=(Q_{1}, \cdots, Q_{k-1})^{T} $.  A {\em divergence}  $ \tilde{D}(P \| Q) $ between  $P$ and $Q$ is a smooth function of $\mathbf{P}$ and $\mathbf{Q}$   (we may write $\tilde{D}(P \| Q)=\tilde{D}(\mathbf{P} \| \mathbf{Q}) $) satisfying the following conditions: 
	\begin{enumerate}
		\item $ \tilde{D}(P \| Q) \geq 0$ for any $P, Q \in \mathcal{P}(\mathcal{Z})$. 
		\item $\tilde{D}(P \| Q) =0$ if, and only if, $P=Q$.
		\item   The Taylor expansion of $\tilde{D}$ satisfies
		\begin{eqnarray*}
			\tilde{D}(\mathbf{P}+ \varepsilon \| \mathbf{P}  ) =\frac{1}{2} \sum_{i,j =1}^{k-1} g_{ij}(\mathbf{P})  \varepsilon_{i} \varepsilon_{j} +O(\| \varepsilon \|_{2}^{3}), \quad \text{as } \| \varepsilon \|_{2} \rightarrow 0,
		\end{eqnarray*}
		for some  $(k-1) \times (k-1)$-dimensional positive-definite matrix $G=(g_{ij})$ that depends on $\mathbf{P}$ and $\varepsilon=(\varepsilon_{1}, \cdots,\varepsilon_{k-1})^{T}$, and where $\| \varepsilon \|_{2} \triangleq \sqrt{\sum_{i=1}^{k-1} (\varepsilon_{i})^{2}}$ is the Euclidean norm of $\varepsilon$.
	\end{enumerate}	
\end{definition}
For any  
$\tilde{D}(P \| Q)=\tilde{D}(\mathbf{P} \| \mathbf{Q}) $, the partial derivatives of  $\tilde{D}(P \| Q)$ with respect to the first variable $\mathbf{P}=(P_{1}, \cdots, P_{k-1})^{T} $ are  given by $\frac{\partial}{\partial P_{i}} \tilde{D}(P \| Q)=\frac{\partial}{\partial P_{i}}\tilde{D}(\mathbf{P} \| \mathbf{Q}), \; i=1,\cdots,k-1$. Similarly, the partial derivatives of  $\tilde{D}(P \| Q)$ with respect to the second variable $\mathbf{Q}=(Q_{1}, \cdots, Q_{k-1})^{T} $ are  given by $\frac{\partial}{\partial Q_{i}} \tilde{D}(P \| Q)=\frac{\partial}{\partial Q_{i}}\tilde{D}(\mathbf{P} \| \mathbf{Q}), \; i=1,\cdots,k-1$. Since, by definition, every divergence  $\tilde{D}$ attains a minimum at $P =Q$,  it holds that
\begin{eqnarray}
	\left. \frac{\partial}{\partial P_{i}} \tilde{D}(P \| Q) \right|_{P =Q}=0, \quad i=1,\cdots, k-1 \label{eq:divpart1}\\
	\left. \frac{\partial}{\partial Q_{i}} \tilde{D}(P \| Q) \right|_{P =Q}=0, \quad i=1,\cdots, k-1. \notag 
\end{eqnarray}
See~\cite[(3.1)--(3.4)]{E85} for more details. 

There are many classes of divergences that are widely used in various fields of science and engineering; see, e.g., \cite{CZPA209} for more details. A well-known example of a divergence is the $f$-divergence defined as follows:
Let $f:(0,\in\infty) \rightarrow \Bbb R$ be a convex function with $f(1)=0$. Then, the \textit{$f$-divergence} between $P$ and $Q$ is defined as 
\begin{eqnarray}
	D_{f}(P \| Q) \triangleq \sum Q_{i} f\left( \frac{P_{i}}{Q_{i}}\right) 
	\label{eq:fdiv}. 
\end{eqnarray}
When  $f(u)=u \log u$, $D_{f}$ is the Kullback-Leibler divergence.
For 
\begin{eqnarray}
	f(u)= \frac{4}{1-\alpha^{2}} (1-u^{(1+\alpha)/2}), \; \text{where}  \; \alpha \neq \pm 1 \label{eq:confdiv}
\end{eqnarray}
the $f$-divergence is the  \textit{$\alpha$-divergence} given by
\begin{eqnarray}
	D_{\alpha}(P \| Q) \triangleq \frac{4}{1-\alpha^{2}} \left[  1- \sum  P_{i}^{\frac{1-\alpha}{2}} Q_{i}^{\frac{1+\alpha}{2}}  \right], \quad  \alpha \neq \pm 1 .  \label{eq:alphadiv}
\end{eqnarray}

We next define a divergence test for case 1 of composite hypothesis testing (see Section~\ref{sec:intro}), That is, it is known that under hypothesis $H_0$ the distribution is $P$, while  under hypothesis $H_1$ the distribution is anything but $P$. A  Hoeffding-like test or divergence test $T_{n}^{\tilde{D}}(r)$ for testing $H_{0} : \bar{P}=P$ against the alternative $H_{1}: \bar{P}=Q$ is defined as follows: \begin{quote} Upon observing  $z^{n}$, if $\tilde{D}(t_{z^{n}} \| P) < r$ for some $r>0$, then the null hypothesis $P$ is accepted; else $P$ is rejected. \end{quote} For $r >0$, define the acceptance region for $P$ as
\begin{eqnarray*}
	\mathcal{A}^{\tilde{D}}_{n}(r) \triangleq	\left\lbrace  z^{n} \; \left| \right.  \; \tilde{D}(t_{z^{n}} \| P)  <  r \right\rbrace.  
\end{eqnarray*}
Then,  the  type-I and the type-II errors are given by 
\begin{align*}
	\alpha_{n}(T_{n}^{\tilde{D}}(r)) & \triangleq P^{n} \left(   \mathcal{A}^{\tilde{D}}_{n}(r) ^{c} \right) \\
	\beta_{n}(T_{n}^{\tilde{D}}(r)) &  \triangleq Q^{n} ( \mathcal{A}^{\tilde{D}}_{n}(r) ).
\end{align*}
Our goal is to analyze  the asymptotics of the type-II error when the type-I error satisfies   $\alpha_{n} \leq \epsilon, \; 0<\epsilon<1$. Following the discussion  in Section~\ref{sec:intro},  we define the first-order term $\beta'$ and the second-order term $\beta''$ for the divergence test as follows:
\begin{eqnarray*}
	\beta' &\triangleq & \lim\limits_{n \rightarrow \infty} -\frac{1}{n}  \log \beta_{n} \\
	\beta'' &\triangleq &  \lim\limits_{n \rightarrow \infty} \frac{-\ln \beta_n -n \beta'} {\sqrt{n}}
\end{eqnarray*}
if the limits exist. For the Hoeffding test, it is known that $\beta'=D(P\|Q)$. In this paper, we generalize this result to the divergence test for a class of divergences defined in Definition~\ref{def:div_class}.  We further obtain $\beta''$ for the divergence test.
  \section{Second-order Asymptotics of the Divergence Test}\label{sec:results}
 We shall consider the following class of divergences.
\begin{definition}
	\label{def:div_class}
	Let $\varXi$ denote the class of divergences satisfying the following conditions:
	\begin{enumerate}
			\item  For any $P, T \in \mathcal{P}(\mathcal{Z})$, the  second-order  Taylor approximation of $\tilde{D} (T \| P)$  around $T=P$ is given by
		\begin{eqnarray}
			\tilde{D}(T \| P)
			&= &\eta  d_{\chi^{2}} (T, P) + O( \| \mathbf{T}-\mathbf{P} \|_{2}^{3}),\quad \text{as} \quad \| \mathbf{T}-\mathbf{P} \|_{2} \rightarrow 0 \label{eq:tylpfd}
		\end{eqnarray}
for some constant	$\eta>0$ (that possibly depends on $P$). In \eqref{eq:tylpfd},	 $\mathbf{T}=(T_{1},\cdots,T_{k-1})^{T}$,  $\mathbf{P}=(P_{1},\cdots,P_{k-1})^{T}$,    and  $d_{\chi^{2}}$ is the $\chi^{2}$-divergence defined as
		\begin{eqnarray}
			d_{\chi^{2}} (T, P)  \triangleq \sum_{i=1}^{k}  \frac{(T_{i} -P_{i})^{2}}{P_{i}}. \label{eq:chisquare}
		\end{eqnarray}
	\item For any $P, T \in \mathcal{P}(\mathcal{Z})$, the divergence $\tilde{D}$ satisfies the Pinsker-type inequality 
		\begin{eqnarray}
			\tilde{C}\|T-P\|^{2}_{1} \leq \tilde{D}(T\|  P), \quad \text{for some } \tilde{C}>0 \label{eq:pinskerfd}
		\end{eqnarray}
	where $\|T-P\|_{1} \triangleq \sum_{i=1}^{k} |T_i - P_i| $ denotes the $L^{1}$ norm of
	$T-P $.
		\item   The CDF of the statistic $  \eta^{-1} n	\tilde{D} (t_{z^{n}} \| P)$ can be approximated as 
		\begin{eqnarray}
			P^{n}( 	\eta^{-1} n	\tilde{D} (t_{Z^{n}} \| P) <c)=  F_{\chi^{2},k-1}(c) + O(\delta_{n}),\quad c>0 \label{eq:ratediv2d}
		\end{eqnarray}
		for some positive  sequence  $\delta_{n}$ satisfying  $\lim_{n \rightarrow \infty} \delta_{n}=0$, where $F_{\chi^{2},k-1}$ is the CDF of the chi-squared random variable $\chi^{2}_{k-1}$ with $k-1$ degrees of freedom.
	\end{enumerate}
\end{definition}
The following theorem characterizes $\beta'$ and $\beta''$ of the divergence test for any divergence $\tilde{D} $ in $ \varXi$.
 \begin{theorem} \label{divergence}
	Let $\tilde{D} \in \varXi$ and  $0< \epsilon < 1$.  Consider the divergence test $T_{n}^{\tilde{D}}(r)$  for testing $H_{0} : P$ against the alternative $H_{1}: Q$, where $P, Q \in \mathcal{P}(\mathcal{Z}) $ and  $P \neq Q$. Then,  the type-II error 
	satisfies the following:\\
	Part 1: There exists a threshold value $r_{n}$ satisfying 
	\begin{align}
		\alpha_{n}(T_{n}^{\tilde{D}}(r_{n})) \leq \epsilon \label{eq:typeepsilon}
	\end{align}
	such that, as $n\rightarrow \infty$,
	\begin{align}\label{eq:achiev}
		-\ln \beta_{n}(T_{n}^{\tilde{D}}(r_{n}))  & \geq   nD(P \| Q) -  \sqrt{nV(P \| Q) Q^{-1}_{\chi^{2},k-1}(\epsilon) } +O( \max\{ \delta_{n} \sqrt{n}, \ln n \}).
	\end{align}
	Part 2: For all $r_{n} >0$ satisfying \eqref{eq:typeepsilon}, we have as $n\rightarrow \infty$
	\begin{align}\label{eq:converse}
		-\ln \beta_{n}(T_{n}^{\tilde{D}}(r_{n}))  & \leq  nD(P \| Q) -  \sqrt{nV(P \| Q) Q^{-1}_{\chi^{2},k-1}(\epsilon) } +O( \max\{ \delta_{n} \sqrt{n}, \ln n \})
	\end{align}
	where $Q^{-1}_{\chi^{2},k-1}(\cdot)$ is the inverse  of  $c \mapsto 1-F_{\chi^{2},k-1}(c)$.
\end{theorem}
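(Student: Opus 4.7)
The plan is to combine three ingredients afforded by the divergence class $\varXi$: the chi-squared CDF approximation in Condition~3 to calibrate the threshold $r_n$, the second-order Taylor expansion in Condition~1 together with the Pinsker-type inequality in Condition~2 to localize the relevant types to an $O(n^{-1/2})$-neighborhood of $P$, and the classical method of types to convert this geometry into matching upper and lower bounds on $\beta_n$.

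First I fix the threshold. For Part~1, I set $r_n = (\eta/n)(Q^{-1}_{\chi^{2},k-1}(\epsilon) + C_1 \delta_n)$ with $C_1$ large enough that \eqref{eq:ratediv2d} forces $\alpha_n \leq \epsilon$. For Part~2, any $r_n$ satisfying \eqref{eq:typeepsilon} must obey $nr_n/\eta \geq Q^{-1}_{\chi^{2},k-1}(\epsilon) - C_2 \delta_n$, obtained by inverting \eqref{eq:ratediv2d} and using the local Lipschitz continuity of $c \mapsto 1-F_{\chi^{2},k-1}(c)$ near $Q^{-1}_{\chi^{2},k-1}(\epsilon)$. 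Either way, $R := nr_n/\eta = Q^{-1}_{\chi^{2},k-1}(\epsilon) + O(\delta_n)$.

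Next I parameterize a type as $t = P + \varepsilon/\sqrt{n}$ with $\sum_i \varepsilon_i = 0$. By \eqref{eq:pinskerfd} every $t$ in the acceptance region satisfies $\|t - P\|_1 = O(n^{-1/2})$, so \eqref{eq:tylpfd} gives $\tilde{D}(t\|P) = (\eta/n)\sum_i \varepsilon_i^2/P_i + O(n^{-3/2})$, while a direct Taylor expansion of the KL divergence yields
\[
D(t\|Q) = D(P\|Q) + \frac{1}{\sqrt{n}}\sum_i \varepsilon_i \ln\frac{P_i}{Q_i} + \frac{1}{2n}\sum_i \frac{\varepsilon_i^2}{P_i} + O(n^{-3/2}).
\]
Using $\sum_i \varepsilon_i = 0$ the linear term equals $n^{-1/2}\sum_i \varepsilon_i(\ln(P_i/Q_i) - D(P\|Q))$, and Cauchy--Schwarz against the ellipsoid $\sum_i \varepsilon_i^2/P_i \leq R$ identifies
\[
\min_{t \in \mathcal{A}^{\tilde{D}}_n(r_n)} n\, D(t\|Q) \;=\; n\, D(P\|Q) - \sqrt{n\, R\, V(P\|Q)} + O(1),
\]
attained at $\varepsilon_i^{*} \propto -P_i(\ln(P_i/Q_i) - D(P\|Q))$. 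Substituting $R = Q^{-1}_{\chi^{2},k-1}(\epsilon) + O(\delta_n)$ and expanding the square root contributes the $O(\delta_n\sqrt{n})$ part of the remainder.

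For Part~1 the union bound $\beta_n \leq (n+1)^k \max_{t \in \mathcal{A}^{\tilde{D}}_n(r_n)} e^{-nD(t\|Q)}$ combined with the above minimization delivers \eqref{eq:achiev}. For Part~2 I round the continuous optimizer to the nearest valid type $t^{\star}$, absorbing an $O(1/n)$ shrinkage of the $\tilde{D}$-ball into the $\ln n$ error so that $t^{\star} \in \mathcal{A}^{\tilde{D}}_n(r_n)$, and invoke $Q^n(T_{t^{\star}}) \geq (n+1)^{-k} e^{-nD(t^{\star}\|Q)}$ to obtain \eqref{eq:converse}. The hard part is precisely this rounding step in the converse: one has to produce an honest type $t^{\star}$ that simultaneously lies in the acceptance region despite the $1/n$-lattice, is close enough to the continuous minimizer that the Taylor remainders remain $O(n^{-3/2})$, and still respects the Pinsker localization even though the optimizer sits on the boundary of a shrinking ball. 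The smoothness of $\tilde{D}$ provides the Lipschitz control needed to move from the continuous minimizer to a nearby type without violating the constraint, and \eqref{eq:ratediv2d} is exactly what makes the threshold calibration and the rounding errors harmless on the advertised $O(\max\{\delta_n\sqrt{n}, \ln n\})$ scale.
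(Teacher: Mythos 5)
Your approach is structurally identical to the paper's: calibrate $r_n$ via Condition~3, localize the acceptance region to an $O(n^{-1/2})$-neighborhood of $P$ via the Pinsker inequality, Taylor-expand both $\tilde{D}(t\|P)$ and $D(t\|Q)$, minimize the linear functional over a $\chi^2$-ellipsoid, and invoke the method of types for both the upper bound (counting types) and the lower bound (one type class). Your Cauchy--Schwarz derivation of the optimum $-\sqrt{RV(P\|Q)}$ is a cleaner route to what the paper establishes by KKT conditions (their Lemma~\ref{optim}); the two are equivalent, with Cauchy--Schwarz being more economical.

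The genuine gap is in Part~2, at exactly the spot you flagged. You write that ``smoothness of $\tilde D$ provides the Lipschitz control needed to move from the continuous minimizer to a nearby type without violating the constraint,'' but this does not resolve the actual obstruction. The difficulty is not smoothness of $\tilde D$; it is discrete geometry. The continuous minimizer $\Gamma^*$ sits \emph{exactly} on the boundary $d_{\chi^2}(\Gamma^*,P)=\bar r_n$ of a ball whose radius is $\Theta(1/n)$, and you must find a point of the lattice $\{T: nT \in \mathbb Z^k_{\geq 0}, \sum_i nT_i = n\}$ that (i) lies weakly inside the ball, (ii) lies on the simplex, and (iii) changes $n\ell(\cdot)$ by only $O(1)$. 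Rounding each coordinate of $n\Gamma_i^*$ independently toward $nP_i$ takes care of (i) for those coordinates, but then $\sum_i nT_i^* \neq n$; the integer residual must be dumped onto some coordinate, and doing so carelessly can push $T^*$ outside the $\chi^2$-ball or produce a negative entry. The paper's Lemma~\ref{type} resolves this by singling out two coordinates $l, l'$ with $\Gamma^*_l < P_l$ and $\Gamma^*_{l'} > P_{l'}$ --- their existence is nontrivial and is exactly what $P\neq Q$ and $\sum_i P_i(\alpha_i - D(P\|Q))=0$ buy you --- noting that for these coordinates $|n\Gamma_i^* - nP_i| = \Theta(\sqrt n)$, so they have room to absorb an $O(1)$ adjustment without crossing $P_i$ or leaving the ball, while keeping $n\ell$ within a constant. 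Your sketch states the desiderata correctly, but the construction that satisfies all three simultaneously is the technical heart of the converse and is missing from the proposal.
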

\begin{proof}
	See Section~\ref{Sec_Pf_Main}.
\end{proof}
Since the sequence $\delta_n$ in~\eqref{eq:achiev} and~\eqref{eq:converse} tends to zero as $n \to \infty$, we have that $O( \max\{ \delta_{n} \sqrt{n}, \ln n \}) = o(\sqrt{n})$. Consequently, \eqref{eq:achiev} and~\eqref{eq:converse} imply that, for the optimal threshold value $r_{n}$,
\begin{align}\label{eq:second_order}
-\ln \beta_{n}(T_{n}^{\tilde{D}}(r_{n})) & =  nD(P \| Q) -  \sqrt{nV(P \| Q) Q^{-1}_{\chi^{2},k-1}(\epsilon) } + o(\sqrt{n}).
\end{align}
Thus, Theorem~\ref{divergence} characterizes the first and second-order terms of the divergence test for any $\tilde{D} \in \varXi$. It can be verified that $\sqrt{ Q^{-1}_{\chi^{2},k-1}(\epsilon) } > Q^{-1}(\epsilon) $ for all $0 <\epsilon<1$. Consequently, the second-order term $\beta''$ of the divergence test is strictly smaller than the second-order term of the Neyman-Pearson test. 

For certain divergences in $\varXi$,  we can obtain more precise asymptotics than the one  in~\eqref{eq:second_order} by tightening the $o(\sqrt{n})$ term. To this end, we shall first discuss  the three conditions stated in Definition~\ref{def:div_class}, which we shall do one by one. 

\emph{Condition 1}:
 When $\tilde{D}$ is the $f$-divergence $D_{f}$,  it follows from a Taylor-series expansion of $D_{f}(T \| P)$ around $T=P$ that $D_{f}$ satisfies  \eqref{eq:tylpfd} with $\eta=\frac{f''(1)}{2}$ \cite[Th. 4.1]{CS204}.  Since  both the $\alpha$-divergence and the KL divergence $D$ belong to the $f$-divergence class with $f''(1)=1$, we obtain that $\tilde{D}=\{ D_{\alpha}, D\}$ satisfy \eqref{eq:tylpfd} with $\eta=\frac{1}{2}$ \cite{AC210}. 
 Furthermore, for the Renyi divergence of order $\alpha>0$, defined as
 \begin{eqnarray*}
 	I_{\alpha}(P \| Q) =  \frac{1}{\alpha-1} \left[  \ln \left(  \sum _{i=1}^{k} P_{i}^{\alpha} Q_{i}^{1-\alpha} \right)   \right], \quad \alpha \neq 1
 \end{eqnarray*}
 we can show that $\eta=\frac{\alpha}{2}, \; i=1,\cdots, k-1$. Thus,  $I_{\alpha}$ satisfies  \eqref{eq:tylpfd} with $\eta=\frac{\alpha}{2}$. There are many more divergences that satisfy \eqref{eq:tylpfd}, such as  the Battacharya divergence, the Tsallis divergence, and the  Sharma-Mittal divergence; see \cite{CZPA209} for more details. In information geometry, such divergences are referred to as invariant divergences \cite[Eq. 2.29]{CZPA209}.
 
\emph{Condition 2}: It is well-known that the KL divergence $D$ satisfies Pinsker's inequality \cite[Lemma 11.6.1]{ATCB}. From \cite[Th. 3]{G10}, it further follows that Pinsker's inequality can be generalized to many $f$-divergences under some conditions on $f$. In particular, it follows from \cite[Cor. 6]{G10} that  the $\alpha$-divergence satisfies a Pinsker-type inequality when $3\leq \alpha\leq 3, \alpha \neq \pm 1$:
\begin{eqnarray*}
	\frac{1}{2}\|T-P\|^{2}_{1} \leq D _{\alpha}(T\|  P). \label{eq:pinskerf}
\end{eqnarray*}
It follows from the same corollary that the Renyi divergence  $I_{\alpha}$ also satisfies  \eqref{eq:pinskerfd}.   
Furthermore, for any $f$-divergence, \cite[Th. 1]{C67} demonstrates that, if $D_{f} (T, P) <\epsilon$ for a sufficiently small $\epsilon$, then $\frac{f''(1)}{2 \sqrt{2}} \|T-P\|^{2}_{1} <\epsilon$. This implies, that if $D_{f}(T,P)=O(a_{n})$ for some positive sequence $a_{n}$ such that $a_{n}\rightarrow 0$, then $\|T-P\|_{1}=O(\sqrt{a_{n}})$, which is sufficient to prove Theorem~\ref{divergence}. 
In fact, condition 2 can be replaced by the condition that, if $\tilde{D}(T,P)=O(a_{n})$ for some positive sequence $a_{n}$ such that $a_{n}\rightarrow 0$, then $\|T-P\|_{1}=O(\sqrt{a_{n}})$.

\emph{Condition 3}: Note that this condition is a Berry-Esseen-type  convergence of the statistic $  \eta^{-1} n	\tilde{D} (t_{Z^{n}} \| P)$ to the chi-squared distribution with $k-1$ degrees of freedom. For both the $\alpha$-divergence and the KL divergence, we have $\eta=\frac{1}{2}$. The following result implies that, for these divergences, \eqref{eq:ratediv2d} holds with  $\delta_{n}=\frac{1}{\sqrt{n}}$.  Indeed, it follows from~\cite[Th.  3.1]{TR84} and~\cite[Th. 3]{Y72} that the power divergence statistics 	
	\begin{eqnarray*}
		T_{\lambda} (\mathbf{Y})  &\overset{\Delta}{=} & \frac{2}{\lambda (\lambda+1)}\sum_{i=1}^{k} Y_{i} \left[  \left(  \frac{Y_{i}}{nP_{i}}\right) ^{\lambda}-1 \right], \quad  \lambda \in \Bbb R \setminus \{0,-1\} \\
			T_{0} (\mathbf{Y})  &\overset{\Delta}{=} &	\lim_{\lambda \rightarrow 0}T_{\lambda} (\mathbf{Y}) \\
			T_{-1} (\mathbf{Y})  &\overset{\Delta}{=} &	\lim_{\lambda \rightarrow -1}T_{\lambda} (\mathbf{Y}) 
	\end{eqnarray*}
	satisfies
	\begin{eqnarray}
		P^{n} (T_{\lambda}(\mathbf{Y}) <c)=  F_{\chi^{2},k-1}(c) + O(n^{-1/2}),\quad c>0, \; \lambda \in \Bbb R. \label{eq:ratealphadiv1}
	\end{eqnarray}
See  \cite{UZ09} for more details. Letting  $\lambda=\frac{-(1+\alpha)}{2}, \; \alpha \neq \pm 1$, we can write 	$T_{\lambda} (\mathbf{Y}) $ in terms of the $\alpha$-divergence as	
	\begin{eqnarray*}
		T_{\lambda} (\mathbf{Y})  =
		2n D_{\alpha}(t_{Z^{n}} \| P), \quad \alpha \neq \pm 1.
	\end{eqnarray*}
Similarly, for $\lambda=0$, $T_{0} (\mathbf{Y}) $ can be expressed in terms of the KL divergence as $ T_{0} (\mathbf{Y})  =
2n D(t_{z^{n}} \| P)$.
It then follows from \eqref{eq:ratealphadiv1} that  $\tilde{D}=\{ D_{\alpha}, D\}, \; \alpha\neq  \pm 1$  satisfies \eqref{eq:ratediv2d}.   

From the above discussion, we conclude that $\tilde{D} = \{ D, D_{\alpha}\}$ with $-3\leq \alpha\leq 3, \alpha \neq \pm 1$  satisfy all  three conditions in Definition~\ref{def:div_class}  with $\delta_{n}=\frac{1}{\sqrt{n}}$ and thus  belong to the class  $\varXi$.  We thus have the following corollary to Theorem~\ref{divergence}, which sharpens the third-order term in~\eqref{eq:second_order}. 
 \begin{corollary} \label{alpdivergence}
	Let $0< \epsilon < 1$. Consider the divergence test $T_{n}^{\tilde{D}}(r)$ for $\tilde{D} = \{ D, D_{\alpha}\}$, $-3\leq \alpha\leq 3, \alpha \neq \pm 1$,  for testing $H_{0} : P$ against the alternative $H_{1}: Q$, where $P, Q \in \mathcal{P}(\mathcal{Z}) $ and  $P \neq Q$. Then,  the type-II error  of $T_{n}^{\tilde{D}}(r_{n})$ minimized over all the threshold values $r_{n}$ satisfying
	\begin{eqnarray*}
		\alpha_{n}(T_{n}^{\tilde{D}}(r_{n})) \leq \epsilon
	\end{eqnarray*}
	can be characterized as
	\begin{align*}
		-\ln \beta_{n}(T_{n}^{\tilde{D}}(r_{n}))  =  nD(P \| Q) -  \sqrt{nV(P \| Q) Q^{-1}_{\chi^{2},k-1}(\epsilon) } 
		+O(\ln n),\quad \text{as} \quad n\rightarrow \infty. 
	\end{align*}
\end{corollary}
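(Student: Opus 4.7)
The plan is to invoke Theorem~\ref{divergence} directly after verifying that both $\tilde D = D$ and $\tilde D = D_\alpha$ (for $-3 \leq \alpha \leq 3,\, \alpha \neq \pm 1$) lie in the class $\varXi$ with the specific rate $\delta_n = 1/\sqrt{n}$. Once this membership is established, the error term $O(\max\{\delta_n\sqrt{n},\ln n\})$ in \eqref{eq:achiev}--\eqref{eq:converse} collapses to $O(\max\{1,\ln n\}) = O(\ln n)$, and the conclusion follows at once.

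The verification of the three conditions in Definition~\ref{def:div_class} is essentially a repackaging of the facts gathered in the discussion preceding the corollary. For \emph{condition 1}, both $D$ and $D_\alpha$ are $f$-divergences with $f(u)=u\ln u$ and $f(u) = \tfrac{4}{1-\alpha^{2}}(1-u^{(1+\alpha)/2})$, respectively. A direct computation yields $f''(1)=1$ in both cases, so a second-order Taylor expansion around $T=P$ gives $\tilde D(T\|P) = \tfrac12 d_{\chi^{2}}(T,P) + O(\|\mathbf{T}-\mathbf{P}\|_2^3)$ (as in \cite[Th.~4.1]{CS204}), establishing \eqref{eq:tylpfd} with $\eta = 1/2$. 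For \emph{condition 2}, the Pinsker inequality for $D$ is classical \cite[Lemma~11.6.1]{ATCB}, while for $D_\alpha$ with $\alpha$ in the stated range, the Pinsker-type inequality \eqref{eq:pinskerfd} is provided by \cite[Cor.~6]{G10}.

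For \emph{condition 3}, I would express both divergences as power-divergence statistics: setting $\lambda = -(1+\alpha)/2$ identifies $T_\lambda(\mathbf{Y}) = 2 n D_\alpha(t_{Z^n}\|P)$, and setting $\lambda = 0$ identifies $T_0(\mathbf{Y}) = 2 n D(t_{Z^n}\|P)$. With $\eta = 1/2$, the statistic $\eta^{-1} n \tilde D(t_{Z^n}\|P)$ coincides exactly with $T_\lambda(\mathbf{Y})$, and the Berry--Esseen-type bound
\begin{equation*}
    P^n(T_\lambda(\mathbf{Y}) < c) = F_{\chi^{2},k-1}(c) + O(n^{-1/2})
\end{equation*}
from \cite[Th.~3.1]{TR84} and \cite[Th.~3]{Y72} yields \eqref{eq:ratediv2d} with $\delta_n = n^{-1/2}$.

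With the three conditions verified, Theorem~\ref{divergence} gives both the achievability and converse bounds with error term $O(\max\{\delta_n\sqrt{n},\ln n\}) = O(\max\{1,\ln n\}) = O(\ln n)$, which matches the claim. The only non-routine aspect is the bookkeeping to match each divergence with the correct power-divergence index $\lambda$ and to confirm that the cited Berry--Esseen-type result applies with no hidden hypotheses (e.g., that the bound is uniform in $c>0$ in the relevant regime); all the analytic heavy lifting has already been done inside Theorem~\ref{divergence}, so I do not anticipate a serious obstacle here.
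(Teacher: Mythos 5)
Your proposal is correct and follows the same route as the paper: verify that $D$ and $D_\alpha$ (with $-3\leq\alpha\leq 3$, $\alpha\neq\pm 1$) satisfy the three conditions of Definition~\ref{def:div_class} with $\eta=1/2$ and $\delta_n=n^{-1/2}$ (via $f''(1)=1$, \cite[Cor.~6]{G10}, and the power-divergence Berry--Esseen bound from \cite[Th.~3.1]{TR84} and \cite[Th.~3]{Y72}), then apply Theorem~\ref{divergence} so that $O(\max\{\delta_n\sqrt{n},\ln n\})=O(\ln n)$. This matches the discussion the paper gives immediately before the corollary, which constitutes its proof.
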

Note that, when $\tilde{D}$ is the KL divergence, the divergence test $T_{n}^{\tilde{D}}(r)$ is the Hoeffding test. Hence, the above corollary recovers the second-order asymptotics of the Hoeffding test as a special case. 

\section{Proof of Theorem~\ref{divergence}}
\label{Sec_Pf_Main}
\subsection{Proof of  Part 1 of  Theorem  \ref{divergence}}
From  \eqref{eq:ratediv2d}, it follows that  there exist $M_{0}>0$ and $N_{0} \in \Bbb N$ such that 
\begin{eqnarray}
	\left| P^{n} \left( \eta^{-1}n \tilde{D}(t_{Z^{n}} \| P) \geq c \right) - Q_{\chi^{2},k-1}(c) \right|  \leq M_{0}\delta_{n}, \quad n \geq N_{0}. \label{eq:ratekl1}
\end{eqnarray} 
For $0<\epsilon<1$, let 
\begin{eqnarray}
	r_{n}= \frac{\eta}{n}Q_{\chi^{2},k-1}^{-1} \left( \epsilon- M_{0} \delta_{n} \right).  \label{eq:rvalue}
\end{eqnarray}
Then, from \eqref{eq:ratekl1}, it follows that the type-I error is upper-bounded by $\epsilon$:
\begin{eqnarray*}
	\alpha_{n}(T_{n}^{\tilde{D}}(r_{n})) & =&  P^{n}(  \tilde{D}(t_{Z^{n}} \| P)  \geq  r_{n}) \\
	& =&  P^{n} \left(  \eta^{-1} n D(t_{Z^{n}} \| P) \geq  Q_{\chi^{2},k-1}^{-1} ( \epsilon- M_{0} \delta_{n} ) \right) \\
	& \leq & Q_{\chi^{2},k-1} \left(  Q_{\chi^{2},k-1}^{-1} \left( \epsilon- M_{0} \delta_{n}  \right) \right) +  M_{0} \delta_{n}  \\
	& =& \epsilon, \quad n\geq N_{0}.
\end{eqnarray*}
We next consider the type-II error for  $r_n$ in \eqref{eq:rvalue}. To this end, define
\begin{eqnarray*}
	A_{\tilde{D}}(r') &\triangleq & \left\lbrace T \in \mathcal{P}(\mathcal{Z}) \mid \tilde{D} (T \|  P) <r' \right\rbrace, \quad r'>0. \label{eq:klball}  
\end{eqnarray*}
Then,  
\begin{eqnarray}
	\beta_{n}(T_{n}^{\tilde{D}}(r_{n}))  &=&  Q^{n}(\tilde{D}(t_{Z^{n}} \| P)   < r_{n}) \nonumber  \\
	&= &  \sum_{z^{n}:\tilde{D}(t_{z^{n}} \| P)   < r_{n} } Q^{n}(z^{n}) \nonumber  \\
	&=&  \sum_{\tilde{P} \in \mathcal{P}_{n} \cap A_{\tilde{D}} (r_{n})}  Q^{n}(T(\tilde{P})) \nonumber  \\
	& \leq&   \sum_{\tilde{P} \in \mathcal{P}_{n} \cap A_{\tilde{D}} (r_{n})}  \exp \{-n D(\tilde{P} \| Q) \},\quad n\geq N_{0}  \label{eq:tyerr}
\end{eqnarray}
where the last step follows from  \cite[Th. 11.1.4]{ATCB}. In~\eqref{eq:tyerr}, $\mathcal{P}_{n}$ denotes the set of types with denominator $n$ and $T(\tilde{P})$ is  the type class  of $\tilde{P}$,  defined as
\begin{eqnarray}
	T(\tilde{P}) \triangleq \left\lbrace z^{n} \in \mathcal{Z}^{n} \; | \; t_{z^{n}} = \tilde{P} \right\rbrace. \label{eq:typeclass}
\end{eqnarray}

We next derive a lower bound on $D(\tilde{P} \| Q)$ for $\tilde{P} \in \mathcal{P}_{n} \cap A_{\tilde{D}} (r_{n})$. To this end,  we shall use  the following auxiliary results.
\begin{lemma} \label{ctaylor} For any two probability distributions  $Q=(Q_{1},\cdots,Q_{k})^{T}$ and  $T=(T_{1},\cdots, T_{k})^{T}$, the second-order Taylor approximation  of the function $T \mapsto D(T \|Q)$ around  the null hypothesis $T=P$ is given by
	\begin{eqnarray}
		D(T \|Q)=  D(P \| Q) + \sum_{i=1}^{k} (T_{i} -P_{i} ) \ln    \left( \frac{P_{i}}{ Q_{i}} \right)+  \frac{1}{2} d_{\chi^{2}} (T, P) + O(\|\mathbf{T}-\mathbf{P}\|_{2}^{3})  \label{eq:tylq}
	\end{eqnarray}
	as $\|\mathbf{T}-\mathbf{P}\|_{2} \rightarrow 0$.
\end{lemma}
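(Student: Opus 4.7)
The plan is to carry out a direct term-by-term Taylor expansion of $D(T\|Q) = \sum_{i=1}^{k} T_i \ln(T_i/Q_i)$ in the increments $\varepsilon_i \triangleq T_i - P_i$, exploiting the probability-simplex constraint $\sum_{i=1}^{k} \varepsilon_i = 0$. Since $P \in \mathcal{P}(\mathcal{Z})$ satisfies $P_i > 0$ for every $i$, the ratios $\varepsilon_i/P_i$ are small when $\|\mathbf{T}-\mathbf{P}\|_2 \to 0$, so the scalar expansion $\ln(1+x) = x - x^2/2 + O(x^3)$ can be applied to each summand without issue.

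Concretely, I would write
\begin{align*}
T_i \ln\frac{T_i}{Q_i} &= (P_i + \varepsilon_i)\left[\ln\frac{P_i}{Q_i} + \ln\!\left(1 + \frac{\varepsilon_i}{P_i}\right)\right] \\
&= P_i\ln\frac{P_i}{Q_i} + \varepsilon_i \ln\frac{P_i}{Q_i} + (P_i + \varepsilon_i)\left(\frac{\varepsilon_i}{P_i} - \frac{\varepsilon_i^2}{2 P_i^2} + O\!\left(\frac{\varepsilon_i^3}{P_i^3}\right)\right).
\end{align*}
Expanding the last factor gives $\varepsilon_i + \varepsilon_i^2/(2 P_i) + O(\varepsilon_i^3/P_i^2)$. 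Summing over $i=1,\ldots,k$, the constant part yields $D(P\|Q)$, the term $\sum_i \varepsilon_i \ln(P_i/Q_i)$ becomes exactly the claimed first-order term $\sum_{i=1}^k (T_i - P_i)\ln(P_i/Q_i)$, the stray linear contribution $\sum_i \varepsilon_i$ vanishes by the probability-simplex constraint, and the quadratic piece is precisely $\tfrac{1}{2}\, d_{\chi^2}(T,P)$ as defined in \eqref{eq:chisquare}.

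The one subtle step is the control of the cubic remainder, and this is where I expect the main obstacle to lie. Because the expansion is done coordinate-wise in $i=1,\ldots,k$, I need a uniform bound $|\varepsilon_i| \leq C\, \|\mathbf{T}-\mathbf{P}\|_2$ valid for every coordinate — including $i=k$. For $i < k$ this is immediate, while for $i=k$ I would use $|\varepsilon_k| = |\sum_{i=1}^{k-1} \varepsilon_i| \leq \sqrt{k-1}\, \|\mathbf{T}-\mathbf{P}\|_2$. Combined with $P_i > 0$ for every $i$ (so that $1/P_i$ is bounded by a finite constant depending only on $P$), each summand's remainder is $O(\|\mathbf{T}-\mathbf{P}\|_2^3)$ uniformly in $i$, and summing over the finite index set $\{1,\ldots,k\}$ preserves this order, yielding the claimed remainder in \eqref{eq:tylq}.
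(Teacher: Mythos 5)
Your proof is correct, and it takes a genuinely different route from the paper's. The paper works intrinsically in the $(k-1)$-dimensional coordinate system $\mathbf{T}=(T_1,\ldots,T_{k-1})$, substitutes $T_k = 1-\sum_{i<k}T_i$ into $D(T\|Q)$, computes all first-, second-, and third-order partial derivatives of the resulting function $f_Q(\mathbf{T})$ via the chain rule, and bounds the Lagrange-form multivariate remainder. This requires explicit bookkeeping of mixed partials (the Hessian has off-diagonal entries $1/T_k$, the third derivatives involve $1/T_k^2$, etc.), after which the simplex identity $T_k - P_k = -\sum_{i<k}(T_i-P_i)$ is used to rewrite the gradient and Hessian terms as the symmetric sums over all $k$ coordinates. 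You instead expand each summand $g_i(T_i) \triangleq T_i\ln(T_i/Q_i)$ as a one-dimensional Taylor polynomial in $\varepsilon_i = T_i - P_i$ around $P_i$, treating all $k$ coordinates as if unconstrained; the price is a spurious linear term $\sum_i \varepsilon_i$, which you discharge at the very end with the single identity $\sum_i\varepsilon_i = 0$. What your approach buys is a cleaner remainder: each $g_i$ has Lagrange remainder $-\varepsilon_i^3/(6\xi_i^2)$ with $\xi_i$ between $P_i$ and $T_i$, so the bound reduces to the scalar case and needs only $\min_i P_i > 0$ plus your correct observation that $|\varepsilon_k| \leq \sqrt{k-1}\,\|\mathbf{T}-\mathbf{P}\|_2$; no multi-index combinatorics are needed and the symmetry in $i$ is manifest throughout. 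What the paper's approach buys is consistency with the $(k-1)$-manifold formalism used in Definition~1 and the rest of the paper (where $\tilde{D}$ is defined and differentiated as a function of $\mathbf{P},\mathbf{Q}$), at the cost of more derivative bookkeeping. Both proofs rely on $P\in\mathcal{P}(\mathcal{Z})$ having strictly positive entries to keep $1/P_i$ and the remainder constants finite, and both arrive at the same expansion.
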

\begin{IEEEproof} See Appendix~\ref{claimtaylor}.
\end{IEEEproof}
By the  Pinsker-type inequality, we obtain that, for every $T \in A_{\tilde{D}}(r)$,  we have $\|T-P\|_{1} \leq \tilde{k} \sqrt{r}$ for some constant $\tilde{k}$.  This implies that $\|\mathbf{T}-\mathbf{P}\|_{1} \leq \tilde{k}_{1} \sqrt{r}$ for some constant $\tilde{k}_{1}$. Since any two norms on a finite-dimensional Euclidean space are equivalent, there exists a constant $C_{0} >0$ such that the Euclidean distance  is bounded by
\begin{eqnarray}
	\|\mathbf{T}-\mathbf{P}\|_{2} &\leq &  C_{0} \sqrt{r}. \label{eq:euorder}
\end{eqnarray}
Thus, for any $T \in A_{\tilde{D}}(r_{n})$ with $r_{n}$ in \eqref{eq:rvalue}, it follows  that 
\begin{eqnarray}
	\|\mathbf{T}-\mathbf{P}\|_{2}  = O\left( \frac{1}{\sqrt{n}}\right)
\end{eqnarray}
since $ r_{n}=O\left( \frac{1}{n}\right)$. Then, for any $T \in A_{\tilde{D}}(r_{n})$ with $r_{n}$ in \eqref{eq:rvalue},  \eqref{eq:tylpfd} and \eqref{eq:tylq} and  can be written as 
\begin{eqnarray}
		\tilde{D}(T \| P)
	&= &\eta d_{\chi^{2}} (T, P) + O\left( \frac{1}{n^{3/2}} \right) \label{eq:typorder} \\  
	D (T \|  Q) &=&  D(P \| Q) + \sum_{i=1}^{k} (T_{i} -p_{i} ) \ln    \left( \frac{p_{i}}{ q_{i}} \right)+  \frac{1}{2} d_{\chi^{2}} (T, P) + O\left( \frac{1}{n^{3/2}} \right).  \label{eq:tyqorder}
\end{eqnarray}
The expansion \eqref{eq:typorder} implies that  there exist $M>0$ and $N_{1} \geq  N_{0}$ such that, for all $ n \geq  N_{1} $ and $T \in A_{\tilde{D}}(r_{n})$,
\begin{eqnarray}
	|\tilde{D}(T \| P) -  \eta d_{\chi^{2}} (T, P) | \leq  \frac{ M }{n^{3/2}}. \label{eq:tylp2}
\end{eqnarray}
Next define
\begin{eqnarray*}
	A_{\chi^{2}}(r') &\triangleq & \left\lbrace T  \in \mathcal{P}(\mathcal{Z}) \mid d_{\chi^{2}} (T,  P) <r' \right\rbrace,\quad r'>0\\
	\bar{A}_{\chi^{2}}(r') &\triangleq & \left\lbrace T\in \mathcal{P}(\mathcal{Z}) \mid d_{\chi^{2}} (T,  P) \leq r' \right\rbrace, \quad r'>0. 
\end{eqnarray*}
We have the following lemma.
\begin{lemma} \label{lemmaball}  Let $ r_{n}$ be given in  \eqref{eq:rvalue}. Then, we have 
	\begin{eqnarray}
		A_{\tilde{D}}(r_{n}) \subseteq
		\bar{A}_{\chi^{2}} \left( \frac{r_{n}}{\eta}+ \frac{M}{ \eta n^{3/2}} \right), \quad   n \geq N_{1}. \label{eq:klchi}
	\end{eqnarray}
\end{lemma}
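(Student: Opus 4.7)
The plan is to derive the inclusion directly from the Taylor-bound inequality \eqref{eq:tylp2}, which has been set up immediately before the lemma statement and is precisely what makes the argument work.

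First I would take an arbitrary $T \in A_{\tilde{D}}(r_n)$, so that by definition $\tilde{D}(T\|P) < r_n$. Since the Pinsker-type inequality \eqref{eq:pinskerfd} together with equivalence of norms on $\mathbb{R}^{k-1}$ already yielded the bound $\|\mathbf{T}-\mathbf{P}\|_{2} = O(1/\sqrt{n})$ for such $T$, the cubic remainder in \eqref{eq:tylpfd} is controlled uniformly over $A_{\tilde{D}}(r_n)$, which is exactly the content of \eqref{eq:tylp2}: there exist $M>0$ and $N_1 \geq N_0$ with
\begin{equation*}
|\tilde{D}(T\|P) - \eta\, d_{\chi^{2}}(T,P)| \leq \frac{M}{n^{3/2}}, \quad n \geq N_{1},\; T \in A_{\tilde{D}}(r_n).
\end{equation*}

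Next I would rearrange this one-sided inequality to upper bound $d_{\chi^{2}}(T,P)$ by $\tilde{D}(T\|P)$ plus the remainder:
\begin{equation*}
\eta\, d_{\chi^{2}}(T,P) \leq \tilde{D}(T\|P) + \frac{M}{n^{3/2}} < r_{n} + \frac{M}{n^{3/2}},
\end{equation*}
where the second inequality uses $T \in A_{\tilde{D}}(r_n)$. Dividing through by $\eta>0$ gives
\begin{equation*}
d_{\chi^{2}}(T,P) < \frac{r_{n}}{\eta} + \frac{M}{\eta\, n^{3/2}},
\end{equation*}
so that $T \in \bar{A}_{\chi^{2}}\bigl(\tfrac{r_n}{\eta} + \tfrac{M}{\eta n^{3/2}}\bigr)$. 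Since $T \in A_{\tilde{D}}(r_n)$ was arbitrary and $n \geq N_1$, the desired inclusion \eqref{eq:klchi} follows.

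There is no real obstacle here; the lemma is essentially a bookkeeping consequence of the Taylor estimate \eqref{eq:tylp2}. The only point worth emphasising is why that estimate is available with a remainder that scales as $n^{-3/2}$ rather than merely as $o(1)$: this relies on having first used the Pinsker-type hypothesis (Condition~2 of Definition~\ref{def:div_class}) to force $\|\mathbf{T}-\mathbf{P}\|_2 = O(1/\sqrt{n})$ on $A_{\tilde{D}}(r_n)$, so that the cubic error term in the Taylor expansion of $\tilde{D}$ around $T=P$ decays like $n^{-3/2}$ uniformly. Once that uniformity is in place, the inclusion is immediate.
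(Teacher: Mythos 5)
Your proof is correct and follows essentially the same route as the paper: both start from the uniform bound \eqref{eq:tylp2}, rearrange the one-sided inequality $\eta\, d_{\chi^{2}}(T,P) - M/n^{3/2} \leq \tilde{D}(T\|P) < r_n$, divide by $\eta$, and conclude membership in $\bar{A}_{\chi^{2}}$. The supplementary remarks on how the Pinsker-type inequality underpins the uniformity of the cubic remainder are accurate and match the discussion the paper gives just before the lemma.
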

\begin{IEEEproof} 
	It follows from \eqref{eq:tylp2} that, for  $T \in A_{\tilde{D}}(r_{n})$  and $ n \geq N_{1}$, 
	\begin{eqnarray*}
		\eta d_{\chi^{2}} (T, P) - \frac{ M }{n^{3/2}} 
		&\leq& \tilde{D}(T \| P)<r_{n}.
	\end{eqnarray*}	
	This implies that
	\begin{eqnarray*}
		d_{\chi^{2}} (T, P) < \frac{r_{n}}{\eta}+ \frac{M}{ \eta n^{3/2}}.
	\end{eqnarray*}
	Thus $T \in A_{\chi^{2}} \left( \frac{r_{n}}{\eta}+ \frac{M}{ \eta n^{3/2}} \right)	 \subseteq \bar{A}_{\chi^{2}} \left( \frac{r_{n}}{\eta}+ \frac{M}{ \eta n^{3/2}} \right)	$.
\end{IEEEproof}

We next  use \eqref{eq:tyqorder} to lower-bound the term $D(\tilde{P} \| Q)$ for  $\tilde{P} \in \mathcal{P}_{n} \cap A_{D} (r)$.   Indeed, it follows from \eqref{eq:tyqorder} that there exist $M_{2}>0$ and $\ddot{N}_{2} \in \Bbb N$ such that, for all $n \geq  \ddot{N}_{2} $ and  $T \in A_{\tilde{D}}(r_n)$,
\begin{eqnarray*}
	|D (T \|  Q) -  D(P \| Q) - \sum_{i=1}^{k} (T_{i} -P_{i} ) \ln    \left( \frac{P_{i}}{ Q_{i}} \right)-  \frac{1}{2} d_{\chi^{2}} (T, P) | \leq \frac{ M_{2} }{n^{3/2}}. 
\end{eqnarray*}
Consequently, for all $n \geq  \ddot{N}_{2} $ and  $T \in A_{\tilde{D}}(r_{n})$,
\begin{eqnarray}
	D (T\|  Q) & \geq &  D(P \| Q) + \sum_{i=1}^{k} (T_{i} -P_{i} ) \ln    \left( \frac{P_{i}}{ Q_{i}} \right)+ \frac{1}{2} d_{\chi^{2}} (T, P)- \frac{ M_{2} }{n^{3/2}} \nonumber  \\
	& \geq &  D(P \| Q) + \sum_{i=1}^{k} (T_{i} -P_{i} ) \ln    \left( \frac{P_{i}}{ Q_{i}} \right)- \frac{ M_{2} }{n^{3/2}}  
	\label{eq:tylq4}
\end{eqnarray}
where the last inequality follows since $ d_{\chi^{2}} (T, P)$ is non-negative.

To proceed further, let us consider the following function,  defined for a probability distribution $\Gamma=(\Gamma_{1}, \cdots, \Gamma_{k})^{T} \in  \mathcal{P}(\mathcal{Z})$
\begin{eqnarray}
	\ell(\Gamma) 
	&\triangleq&\sum_{i=1}^{k} (\Gamma_{i} -P_{i} ) \alpha_{i} \label{eq:hfun1}
\end{eqnarray}
where
\begin{eqnarray}
	\alpha_{i}\triangleq\ln    \left( \frac{P_{i}}{ Q_{i}} \right),\quad i=1,\cdots,k. \label{eq:alphai}
\end{eqnarray}
For $n \geq N_{1} $ and  $\tilde{P} \in  \mathcal{P}_{n} \cap A_{\tilde{D}} ( r_{n})$, we have that 
\begin{eqnarray}
	\ell(\tilde{P}) \geq \min_{\mathcal{P}_{n} \cap A_{\tilde{D}} ( r_{n})} \ell(\Gamma)  \geq \min_{\bar{A}_{\chi^{2}} \left(   \frac{r_{n}}{\eta}+ \frac{M}{ \eta n^{3/2}} \right) } \ell(\Gamma)  \label{eq:min}
\end{eqnarray}
where the last  inequality follows from \eqref{eq:klchi}.

To compute the right-most minimum in \eqref{eq:min}, we define the following quantities:
\begin{eqnarray} 
	\mathcal{I} &\triangleq& \left\lbrace  i=1, \cdots,k \; | \;  \alpha_{i} -D(P \| Q)>0  \right\rbrace \label{eq:tauin} \\
	\mathcal{B} &\triangleq&  \left\lbrace  \alpha_{j}-D(P \| Q), j \in \mathcal{I} \right\rbrace  \label{eq:tauv} \\
	\tau &\triangleq & \max \mathcal{B}. \label{eq:taub}
\end{eqnarray}  
We then have the following result.	
\begin{lemma} \label{optim} Let  $0 < \sqrt{\tilde{r}} < \frac{\sqrt{V(P\|Q)}}{\tau}$. Then, the probability distribution $\Gamma^{*}=(\Gamma^{*}_{1},\cdots,\Gamma^{*}_{k} )$ that minimizes $\ell(\Gamma)$  over $\bar{A}_{\chi^{2}}(\tilde{r})$ is given by
	\begin{eqnarray}
		\Gamma^{*}_{i} & = & P_{i} + \frac{\sqrt{\tilde{r}}\left( D(P \| Q)-\alpha_{i}\right) P_{i} }{\sqrt{V(P\|Q)} }; \quad i=1,\cdots,k. \label{eq:minprob}
	\end{eqnarray} 
	Moreover,
	\begin{eqnarray} 
		\min_{\Gamma \in \bar{A}_{\chi^{2}} \left(  \tilde{r}\right)  } \ell(\Gamma)   =- \sqrt{V(P\|Q) \tilde{r}}. \label{eq:min1}
	\end{eqnarray} 
\end{lemma}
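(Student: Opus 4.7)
The plan is to reformulate the problem in centered form, apply a weighted Cauchy--Schwarz inequality, and then verify that the proposed $\Gamma^*$ achieves the bound and is feasible (i.e., a valid probability distribution). Specifically, I would introduce the deviation vector $\varepsilon_i \triangleq \Gamma_i - P_i$, so that the constraints defining $\bar{A}_{\chi^2}(\tilde r)\cap \mathcal{P}(\mathcal{Z})$ become $\sum_i \varepsilon_i = 0$ (probability normalization), $\sum_i \varepsilon_i^2/P_i \leq \tilde r$ (the $\chi^2$ constraint), and $\varepsilon_i \geq -P_i$ (non-negativity).

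Next I would exploit $\sum_i\varepsilon_i=0$ to center the cost: since $\sum_i P_i\alpha_i = D(P\|Q)$,
\begin{equation*}
\ell(\Gamma) = \sum_{i=1}^k \varepsilon_i \alpha_i = \sum_{i=1}^k \varepsilon_i\bigl(\alpha_i - D(P\|Q)\bigr) = \sum_{i=1}^k \frac{\varepsilon_i}{\sqrt{P_i}}\,\sqrt{P_i}\,\bigl(\alpha_i-D(P\|Q)\bigr).
\end{equation*}
A direct application of the Cauchy--Schwarz inequality then yields
\begin{equation*}
\ell(\Gamma)^2 \leq \left(\sum_{i=1}^k \frac{\varepsilon_i^2}{P_i}\right)\left(\sum_{i=1}^k P_i\bigl(\alpha_i - D(P\|Q)\bigr)^2\right) \leq \tilde r\, V(P\|Q),
\end{equation*}
so $\ell(\Gamma)\geq -\sqrt{\tilde r\,V(P\|Q)}$, which is the desired lower bound in~\eqref{eq:min1}.

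To establish equality at $\Gamma^*$, I would observe that the Cauchy--Schwarz bound is tight precisely when $\varepsilon_i/\sqrt{P_i}$ is proportional to $-\sqrt{P_i}(\alpha_i-D(P\|Q))$, i.e., $\varepsilon_i = c\,P_i(D(P\|Q)-\alpha_i)$ for some $c>0$. Picking $c = \sqrt{\tilde r/V(P\|Q)}$ saturates the $\chi^2$ constraint and yields exactly the expression~\eqref{eq:minprob}. A quick check shows that $\sum_i \Gamma^*_i = 1$ (since $\sum_i P_i(\alpha_i-D(P\|Q))=0$) and that plugging $\Gamma^*$ into $\ell$ gives $-\sqrt{V(P\|Q)\tilde r}$, confirming~\eqref{eq:min1}.

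The main obstacle — and the reason the hypothesis $\sqrt{\tilde r}<\sqrt{V(P\|Q)}/\tau$ appears — is verifying that $\Gamma^*$ actually lies in the probability simplex, i.e.\ $\Gamma^*_i \geq 0$ for all $i$. Writing $\Gamma_i^* = P_i\bigl(1 - \sqrt{\tilde r/V(P\|Q)}\,(\alpha_i-D(P\|Q))\bigr)$, the potentially problematic indices are exactly those in $\mathcal I$ defined in~\eqref{eq:tauin}, and the worst one corresponds to $\tau = \max\mathcal B$. The condition $\sqrt{\tilde r}<\sqrt{V(P\|Q)}/\tau$ is precisely what guarantees $1-\sqrt{\tilde r/V(P\|Q)}\,\tau>0$, hence $\Gamma^*\in\mathcal P(\mathcal Z)$, so the Cauchy--Schwarz minimizer is indeed feasible and the lemma follows.
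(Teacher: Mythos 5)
Your proof is correct, and it takes a genuinely different route from the paper's. The paper sets up the full Lagrangian with KKT multipliers for the $\chi^2$-ball constraint, the positivity constraints, and the simplex constraint; it then solves the stationarity system, rules out the degenerate case $\lambda_0=0$ using $P\neq Q$, and finally invokes convexity of the program to certify that the KKT point is a global minimizer. You instead pass to the centered variables $\varepsilon_i=\Gamma_i-P_i$, use $\sum_i\varepsilon_i=0$ to rewrite $\ell$ in terms of $\alpha_i-D(P\|Q)$, and obtain the lower bound $\ell(\Gamma)\geq-\sqrt{\tilde r\,V(P\|Q)}$ in one line via Cauchy--Schwarz applied to the weights $1/P_i$ and $P_i$; this bound holds over the relaxed set where positivity is dropped, so it certainly holds over $\bar A_{\chi^2}(\tilde r)\cap\mathcal P(\mathcal Z)$. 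The equality condition in Cauchy--Schwarz immediately produces the one-parameter family of candidate minimizers, saturating the $\chi^2$ constraint fixes the scale $c=\sqrt{\tilde r/V(P\|Q)}$, and the only remaining work is the feasibility check $\Gamma^*_i>0$, which both you and the paper reduce to the hypothesis $\sqrt{\tilde r}<\sqrt{V(P\|Q)}/\tau$ via the definition of $\tau$ as the largest positive $\alpha_i-D(P\|Q)$. Your approach is more elementary and self-contained: it needs no KKT machinery or convexity argument, it makes the optimal value and the extremizing direction appear simultaneously as Cauchy--Schwarz byproducts, and it makes the role of the hypothesis on $\tilde r$ especially transparent (it is exactly what brings the unconstrained-simplex Cauchy--Schwarz extremizer back inside $\mathcal P(\mathcal Z)$). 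The KKT derivation in the paper is more systematic and would generalize more mechanically if, say, additional affine constraints were added, but for this particular problem the Cauchy--Schwarz argument is shorter and cleaner.
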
  
\begin{IEEEproof}
	See Appendix~\ref{lemmaoptim}.
\end{IEEEproof}
\begin{remark} \label{remmin} Since  $\Gamma^{*}$ and $P$ are probability distributions, by taking summation on both sides of \eqref{eq:minprob}, we obtain
	\begin{eqnarray}
		\sum_{i=1}^{k}  (D(P\|Q)-\alpha_{i})P_{i}=0. \label{eq:z3a}
	\end{eqnarray}
	By the assumption of Theorem~\ref{divergence}, we have that $P \neq Q $. Hence, $D(P\|Q)-\alpha_{i}$ cannot be zero for all $i=1,\cdots,k$. This implies that there exist two distinct indices $l, l' \in \{1,\cdots,k\}$,  such that $D(P\|Q)-\alpha_{l} < 0$ and $D(P\|Q)-\alpha_{l'} > 0$. 
	It follows that $l \in \mathcal{I}$, which in turn yields that  $\mathcal{B}$ is a non-empty set.   
\end{remark}
Let
\begin{eqnarray}
	r'_{n}=\frac{r_{n}}{\eta}+ \frac{M}{ \eta n^{3/2}}. \label{eq:rvalueprime}
\end{eqnarray}
Since $r'_{n}$ is vanishing as $n\rightarrow \infty$, we can choose $\ddot{N}$ such that, for $n \geq \ddot{N}$,  $r'_{n}$  satisfies $ \sqrt{r'_{n}} < \frac{\sqrt{V(P\|Q)}}{\tau}$ with $\tau$ defined in  \eqref{eq:taub}.
Then, \eqref{eq:min} and \eqref{eq:min1} yield that, for $\tilde{P} \in  \mathcal{P}_{n} \cap A_{\tilde{D}} ( r_{n})$ and  $n \geq N_{2}\triangleq\max\{ N_{1}, \ddot{N}_{2}, \ddot{N}\}$, 
\begin{eqnarray}
	\ell(\tilde{P}) &\geq & \min_{\bar{A}_{\chi^{2}}(r'_{n}) } \ell(\Gamma)  \nonumber  \\
	& =&- \sqrt{V(P\|Q) r'_{n}}. \label{eq:minvalue}
\end{eqnarray}
It follows that, for any $\tilde{P}  \in  \mathcal{P}_{n} \cap A_{\tilde{D}} ( r_{n})$,  \eqref{eq:tylq4} can be further lower-bounded as 
\begin{eqnarray}
	D (\tilde{P}\|  Q) 
	& \geq &  D(P \| Q) - \sqrt{V(P\|Q) r'_{n}}- \frac{ M_{2} }{n^{3/2}}, \quad n \geq N_{2}. 
	\label{eq:tylq4a}
\end{eqnarray}
Applying \eqref{eq:tylq4a} to \eqref{eq:tyerr}, the type-II error can be upper-bounded as
\begin{eqnarray}
	\beta_{n}(T_{n}^{\tilde{D}}(r_{n})) & \leq&   \sum_{\tilde{P} \in \mathcal{P}_{n} \cap A_{\tilde{D}} (r_{n})}  \exp \{-n D(\tilde{P} \| Q) \} \nonumber  \\
	& \leq & \sum_{\tilde{P} \in \mathcal{P}_{n} \cap A_{\tilde{D}} (r_{n})}  \exp \left\lbrace  -nD(P \| Q) + n \sqrt{V(P\|Q) r'_{n}}  +  \frac{M_{2}}{\sqrt{n}} \right\rbrace  \nonumber  \\
	& \leq & (n+1)^{\mid \mathcal{Z}\mid}  \exp \left\lbrace  -nD(P \| Q) + n \sqrt{V(P\|Q) r'_{n}}  +  \frac{M_{2}}{\sqrt{n}} \right\rbrace,\quad n \geq N_{2}  \label{eq:tyerror}
\end{eqnarray}
where the last inequality follows since $| \mathcal{P}_{n} | \leq (n+1)^{| \mathcal{Z}|}$ \cite[Th. 11.1.1]{ATCB}. By taking logarithms on both sides of \eqref{eq:tyerror}, we obtain
\begin{eqnarray}
	\ln \beta_{n}(T_{n}^{\tilde{D}}(r_{n})) \leq | \mathcal{Z}| \ln (n+1) -nD(P \| Q) + n \sqrt{V(P\|Q) r'_{n}}  +  \frac{M_{2}}{\sqrt{n}}, \quad  n \geq N_{2}. \label{eq:sub} 
\end{eqnarray}
By \eqref{eq:rvalue} and \eqref{eq:rvalueprime}, 
\begin{eqnarray}
	\sqrt{ r'_{n}} & = & \sqrt{\frac{r_{n}}{\eta}+ \frac{M}{ \eta n^{3/2}}} \nonumber  \\
	&=& \sqrt{\frac{1}{n}Q_{\chi^{2},k-1}^{-1} \left( \epsilon-M_{0} \delta_{n} \right) + \frac{M }{\eta n^{3/2}} } \nonumber  \\
	&= & \frac{1}{\sqrt{n}} \sqrt{Q_{\chi^{2},k-1}^{-1} \left( \epsilon-M_{0} \delta_{n} \right) + \frac{M}{ \eta \sqrt{n}} }.  \label{eq:sqr}
\end{eqnarray}
A Taylor-series expansion of $ \sqrt{x+ \delta} $ around $x>0$ yields then that
\begin{eqnarray}
	\sqrt{ r'_{n}} &= & \frac{1}{\sqrt{n}} \sqrt{Q_{\chi^{2},k-1}^{-1} \left( \epsilon-M_{0} \delta_{n} \right) + \frac{M}{\eta \sqrt{n}} } \nonumber  \\
	&\leq& \frac{1}{\sqrt{n}} \left[  \sqrt{Q_{\chi^{2},k-1}^{-1} \left( \epsilon-M_{0} \delta_{n} \right)} +  \frac{M }{ 2 \eta \sqrt{n} \sqrt{Q_{\chi^{2},k-1}^{-1} \left( \epsilon-M_{0} \delta_{n} \right)}}   \right] \nonumber \\
	&\leq& \frac{1}{\sqrt{n}}   \sqrt{Q_{\chi^{2},k-1}^{-1} \left( \epsilon-M_{0} \delta_{n} \right)}  +  \frac{M}{ 2 \eta n \sqrt{Q_{\chi^{2},k-1}^{-1} ( \epsilon) }}    \label{eq:sub1}
\end{eqnarray}
where the last inequality holds since the function $Q_{\chi^{2},k-1}^{-1} (\cdot)$ is monotonically decreasing. It follows that  \eqref{eq:sub} can be further upper-bounded as
\begin{eqnarray}
	\ln \beta_{n}(T_{n}^{\tilde{D}}(r_{n})) & \leq&  | \mathcal{Z}| \ln (n+1) -nD(P \| Q) + n \sqrt{V(P\|Q) r'_{n}}  +  \frac{M_{2}}{\sqrt{n}}  \nonumber  \\
	& \leq &    |\mathcal{Z}| \ln (n+1) -n  D(P \| Q)+ \sqrt{nV(P\|Q)} \sqrt{Q_{\chi^{2},k-1}^{-1} \left( \epsilon-M_{0} \delta_{n} \right)}  \nonumber \\
	&& \hspace{20mm} +  \frac{\sqrt{V(P\|Q)}M}{ 2 \eta \sqrt{Q_{\chi^{2},k-1}^{-1} ( \epsilon) }}   +\frac{M_{2}}{\sqrt{n}}, \quad n \geq N_{2}. \label{eq:sub4}
\end{eqnarray}
Using a Taylor-series expansion of   $\sqrt{Q_{\chi^{2},k-1}^{-1}(\epsilon-M_0\delta_n)}$ around $\epsilon$, we obtain
\begin{eqnarray}
	\sqrt{Q_{\chi^{2},k-1}^{-1} \left( \epsilon-M_{0} \delta_{n} \right)}
	&=& \sqrt{Q_{\chi^{2},k-1}^{-1} ( \epsilon) } +O\left( \delta_{n} \right). \label{eq:ts} 
\end{eqnarray}
Consequently,  \eqref{eq:sub4} can be expanded as
\begin{eqnarray*}
	\ln \beta_{n}(T_{n}^{\tilde{D}}(r_{n})) & \leq & 
	|\mathcal{Z}| \ln (n+1) -nD(P \| Q)  + \sqrt{nV(P\|Q)} \sqrt{Q_{\chi^{2},k-1}^{-1} ( \epsilon) } +O(\sqrt{n} \delta_{n})  \nonumber \\
	& &  \hspace{5mm} + \frac{\sqrt{V(P\|Q)}M}{ 2 \eta \sqrt{Q_{\chi^{2},k-1}^{-1} ( \epsilon) }}   +\frac{M_{2}}{\sqrt{n}} \nonumber \\
	& = & 
	-nD(P \| Q)  + \sqrt{nV(P\|Q)} \sqrt{Q_{\chi^{2},k-1}^{-1} ( \epsilon) } +O( \max\{ \delta_{n} \sqrt{n}, \ln n \}).
\end{eqnarray*}
This proves Part 1 of Theorem~\ref{divergence}.  
\subsection{ Proof of  Part 2 of Theorem \ref{divergence}}
Define for $0 <\epsilon <1$ and $n \in \Bbb N$, 
\begin{eqnarray}
	R_{n}^{\epsilon} \triangleq \left\lbrace  r > 0 \; \left. \right| \; P^{n} \left( \tilde{D}(t_{Z^{n}} \| P) \geq r \right)  \leq \epsilon \right\rbrace \label{eq:rnset}
\end{eqnarray} 
\begin{eqnarray}
	r^{\epsilon}_{n} \triangleq \inf R_{n}^{\epsilon}. \label{eq:rn}
\end{eqnarray} 
\begin{lemma} \label{spectrum} For $M_{0}$ and $\delta_{n}$ given in \eqref{eq:ratekl1}, we have
	\begin{eqnarray}
		r^{\epsilon}_{n} =\frac{\eta}{n} Q^{-1}_{\chi^{2},k-1}(\epsilon) +  O \left( \frac{\delta_{n}}{n}  \right). \label{eq:sec6o}
	\end{eqnarray}
\end{lemma}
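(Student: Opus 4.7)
The plan is to sandwich $r^{\epsilon}_n$ between two thresholds that both equal $\frac{\eta}{n}Q^{-1}_{\chi^{2},k-1}(\epsilon)$ up to a shift of order $\delta_n/n$, using condition 3 of Definition~\ref{def:div_class} in the form already isolated in~\eqref{eq:ratekl1}, namely
\begin{equation*}
\bigl| P^{n}\bigl( \eta^{-1}n \tilde{D}(t_{Z^{n}} \| P) \geq c \bigr) - Q_{\chi^{2},k-1}(c) \bigr|  \leq M_{0}\delta_{n}, \qquad n \geq N_{0}.
\end{equation*}

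For the upper bound on $r^{\epsilon}_n$, I would reuse the threshold $r_n = \tfrac{\eta}{n} Q^{-1}_{\chi^{2},k-1}(\epsilon - M_0\delta_n)$ of~\eqref{eq:rvalue}. The calculation already carried out in Part~1 of Theorem~\ref{divergence} shows that $P^n(\tilde{D}(t_{Z^n}\|P)\geq r_n) \leq \epsilon$, so $r_n\in R_n^\epsilon$ and hence $r^{\epsilon}_n \leq \tfrac{\eta}{n} Q^{-1}_{\chi^{2},k-1}(\epsilon - M_0\delta_n)$. For the matching lower bound, I would argue that any $r\in R_n^\epsilon$ satisfies, by the other direction of~\eqref{eq:ratekl1},
\begin{equation*}
Q_{\chi^{2},k-1}(\eta^{-1} n r) \leq P^n\bigl(\tilde{D}(t_{Z^n}\|P)\geq r\bigr) + M_0\delta_n \leq \epsilon + M_0\delta_n.
\end{equation*}
Because $Q_{\chi^{2},k-1}$ is strictly decreasing and continuous on $(0,\infty)$, applying its inverse yields $r\geq \tfrac{\eta}{n} Q^{-1}_{\chi^{2},k-1}(\epsilon + M_0\delta_n)$; taking infimum over $R_n^\epsilon$ gives $r^{\epsilon}_n \geq \tfrac{\eta}{n} Q^{-1}_{\chi^{2},k-1}(\epsilon + M_0\delta_n)$.

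Combining the two bounds, the remaining task is to Taylor-expand $Q^{-1}_{\chi^{2},k-1}$ around $\epsilon$. Since the density of the $\chi^2_{k-1}$ law is strictly positive and continuous on $(0,\infty)$, the inverse tail $Q^{-1}_{\chi^{2},k-1}$ is $C^1$ with bounded derivative on any compact subinterval of $(0,1)$; as $\epsilon\pm M_0\delta_n$ lies in a fixed compact subinterval of $(0,1)$ for all sufficiently large $n$, the mean-value theorem gives $Q^{-1}_{\chi^{2},k-1}(\epsilon\pm M_0\delta_n)=Q^{-1}_{\chi^{2},k-1}(\epsilon)+O(\delta_n)$. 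Substituting into the sandwich then delivers~\eqref{eq:sec6o}.

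I do not expect a serious obstacle here, since the argument is essentially a quantitative inversion of the Berry--Esseen-type bound~\eqref{eq:ratekl1}; the only subtlety is to ensure that the argument of $Q^{-1}_{\chi^{2},k-1}$ stays bounded away from $0$ and $1$ so that the Taylor expansion is uniformly controlled, which is automatic because $\delta_n\to 0$.
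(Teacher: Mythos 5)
Your proposal is correct and takes essentially the same route as the paper: an upper bound by exhibiting $\tfrac{\eta}{n}Q^{-1}_{\chi^2,k-1}(\epsilon - M_0\delta_n) \in R_n^\epsilon$, a matching lower bound via the other side of \eqref{eq:ratekl1}, and a Taylor expansion of $Q^{-1}_{\chi^2,k-1}$ around $\epsilon$. The only cosmetic difference is in the lower bound: you bound every $r \in R_n^\epsilon$ directly and take the infimum, obtaining $r_n^\epsilon \geq \tfrac{\eta}{n}Q^{-1}_{\chi^2,k-1}(\epsilon + M_0\delta_n)$, whereas the paper exhibits a specific point $\ddot{r}_n = \tfrac{\eta}{n}Q^{-1}_{\chi^2,k-1}(\epsilon + 2M_0\delta_n)$ that lies strictly outside $R_n^\epsilon$ and then appeals to monotonicity of the tail probability; both yield the same $O(\delta_n/n)$ conclusion.
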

\begin{proof}
	For any $r > 0$, we can write 
	\begin{eqnarray*}
		P^{n} \left( \tilde{D}(t_{Z^{n}} \| P)   \geq r \right) =P^{n} \left( \frac{n}{\eta}  \tilde{D}(t_{Z^{n}} \| P)   \geq \frac{nr}{\eta}  \right).
	\end{eqnarray*}
	Then,  it follows  from \eqref{eq:ratekl1} that, for $n \geq N_{0}$ and any $r>0$,
	\begin{eqnarray}
		Q_{\chi^{2},k-1} \left(  \frac{nr}{\eta}  \right)- M_{0} \delta_{n} \leq P^{n}(\tilde{D}(t_{Z^{n}} \| P)   \geq r) \leq Q_{\chi^{2},k-1} \left( \frac{nr}{\eta} \right)+ M_{0} \delta_{n}.  \label{eq:spec1}
	\end{eqnarray}
	Let
	\begin{eqnarray*}
\dot{r}_{n}= \frac{\eta}{n} Q^{-1}_{\chi^{2},k-1} \left( \epsilon- M_{0} \delta_{n} \right).
	\end{eqnarray*} 
	Substituting the value of $\dot{r}_{n}$ in \eqref{eq:spec1}, we get for $n \geq N_{0}$,
	\begin{eqnarray*}
		P^{n}(\tilde{D}(t_{Z^{n}} \| P)   \geq \dot{r}_{n})  &\leq  \epsilon. \label{eq:spec1rw1}
	\end{eqnarray*}
	Thus,  by the definition of  $r_{n}^{\epsilon}$
	\begin{eqnarray}
		r^{\epsilon}_{n}  &\leq&  \dot{r}_{n} \nonumber \\
		& =& \frac{\eta}{n} Q^{-1}_{\chi^{2},k-1} \left( \epsilon - M_{0} \delta_{n} \right), \quad n \geq N_{0}. \label{eq:rhs}
	\end{eqnarray}
	Furthermore, \eqref{eq:spec1rw1}  implies that $r_{n}^{\epsilon}$ exists. Indeed, the set  $R_{n}^{\epsilon}$ is lower-bounded by $0$. Then, it follows from \eqref{eq:spec1rw1} that, for $n \geq N_{0}$, we have $\dot{r}_{n} \in R_{n}^{\epsilon}$.  This implies that  $R_{n}^{\epsilon}$  is a non-empty subset of the real numbers $\Bbb R$ that has a lower bound.  Hence, by the completeness property of $\Bbb R$,  $r_{n}^{\epsilon}$ exists.  
	
	We next derive a lower bound on $r_n^{\epsilon}$. Let
	\begin{eqnarray*}
		\ddot{r}_{n}= \frac{\eta}{n} Q^{-1}_{\chi^{2},k-1} \left( \epsilon + 2M_{0} \delta_{n} \right). \label{eq:rtilde}
	\end{eqnarray*} 
	Substituting the value of $	\ddot{r}_{n}$ in \eqref{eq:spec1}, we get for $n \geq N_{0}$,
	\begin{eqnarray}
		P^{n}(\tilde{D}(t_{Z^{n}} \| P)   \geq 	\ddot{r}_{n})   \geq   \epsilon + M_{0} \delta_{n} 
		>  \epsilon.  \label{eq:spec1rw1a}
	\end{eqnarray}
	The above equation implies that $	\ddot{r}_{n}$ does not belong to the set $R_{n}^{\epsilon} $.   It follows from the property of the infimum that 
	\begin{eqnarray}
		r^{\epsilon}_{n}  & \geq &  	\ddot{r}_{n} \nonumber \\
		& =& \frac{\eta}{n} Q^{-1}_{\chi^{2},k-1} \left( \epsilon + 2M_{0} \delta_{n} \right), \quad n \geq N_{0}. \label{eq:rhsa}
	\end{eqnarray}
	Thus, from \eqref{eq:rhs}  and \eqref{eq:rhsa} we obtain
	\begin{eqnarray}
		\frac{\eta}{n} Q^{-1}_{\chi^{2},k-1} \left( \epsilon+ 2M_{0} \delta_{n} \right) \leq r_{n}^{\epsilon} \leq  \frac{\eta}{n} Q^{-1}_{\chi^{2},k-1} \left( \epsilon - M_{0} \delta_{n} \right), \quad n \geq N_{0}. \label{eq:rnepsilonin}
	\end{eqnarray}
	Taylor-series expansions of $Q^{-1}_{\chi^{2},k-1} \left( \epsilon+ 2M_{0} \delta_{n} \right) $ and $Q^{-1}_{\chi^{2},k-1} \left( \epsilon - M_{0} \delta_{n} \right)$ around $\epsilon$ yield then that 
	\begin{eqnarray}
		r^{\epsilon}_{n} =\frac{\eta}{n} Q^{-1}_{\chi^{2},k-1}(\epsilon) +  O \left( \frac{\delta_{n}}{n}  \right). \label{eq:sec6}
	\end{eqnarray}
\end{proof}

For any $T \in A_{\tilde{D}}(r^{\epsilon}_{n})$ with $r^{\epsilon}_{n}$ in \eqref{eq:sec6}, it follows from  \eqref{eq:euorder} that 
\begin{eqnarray}
	\|\mathbf{T}-\mathbf{P}\|_{2}= O\left( \frac{1}{\sqrt{n}}\right).
\end{eqnarray}
Then, \eqref{eq:tylpfd} and \eqref{eq:tylq} can be written as 
\begin{eqnarray}
	\tilde{D}(T \| P) &=&   \eta d_{\chi^{2}} (T, P) + O\left( \frac{1}{n^{3/2}} \right) \label{eq:typorderc} \\  
	D (T \|  Q) &=&  D(P \| Q) + \sum_{i=1}^{k} (T_{i} -P_{i} ) \ln    \left( \frac{P_{i}}{ Q_{i}} \right)+  \frac{1}{2} d_{\chi^{2}} (T, P) + O\left( \frac{1}{n^{3/2}} \right).  \label{eq:tyqorderc}
\end{eqnarray}
The first equation  \eqref{eq:typorderc} implies that there exist $\bar{M}'_{0}>0$ and $\bar{N}'_{0} \in \Bbb N$ such that, for all $ n \geq  \bar{N}'_{0} $ and for $T \in A_{\tilde{D}}(r^{\epsilon}_{n})$,
\begin{eqnarray}
	|\tilde{D}(T \| P) -  \eta d_{\chi^{2}} (T, P) | < \frac{ \bar{M}'_{0} }{n^{3/2}}. \label{eq:tylp2w}
\end{eqnarray}
If  $T \in \bar{A}_{\chi^{2}} \left( \frac{r^{\epsilon}_{n}}{\eta}-\frac{\bar{M}'_{0}}{\eta n^{3/2}}\right)$, then we have $d_{\chi^{2}}(T,P)\leq \frac{r^{\epsilon}_{n}}{\eta}-\frac{\bar{M}'_{0}}{\eta n^{3/2}}$. 
Furthermore, by the definition of the $\chi^2$-divergence in \eqref{eq:chisquare} and the fact that $P_{i} <1,\; i=1,\cdots,k$, we have that $d_{\chi^{2}} (T, P) \geq  \|T-P\|_{2}^{2}$.  Since $r^{\epsilon}_{n}=O\left( \frac{1}{n}\right)$,  these two inequalities imply that  $\|T-P\|_{2}= O\left( \frac{1}{\sqrt{n}}\right) $ and,  hence, also   $	\|\mathbf{T}-\mathbf{P}\|_{2}= O\left( \frac{1}{\sqrt{n}}\right)$ for every $T \in \bar{A}_{\chi^{2}} \left( \frac{r^{\epsilon}_{n}}{\eta}-\frac{\bar{M}'_{0}}{\eta n^{3/2}}\right)$.
It follows  from \eqref{eq:tylpfd} that  there exist $\bar{M}'_{1}>0$ and $\bar{N}'_{1} \in \Bbb N$ such that, for all $ n \geq  \bar{N}'_{1}$ and  $T \in \bar{A}_{\chi^{2}} \left( \frac{r^{\epsilon}_{n}}{\eta}-\frac{\bar{M}'_{0}}{\eta n^{3/2}}\right)$, 
\begin{eqnarray}
	|\tilde{D}(T \| P) -  \eta d_{\chi^{2}} (T, P) | < \frac{ \bar{M}'_{1} }{n^{3/2}}. \label{eq:tylp2bw}
\end{eqnarray}
Then we have the following  lemma.
\begin{lemma} \label{lemmaballw} For $ n \geq N'_{1}\triangleq\max \{ \bar{N}'_{0}, \bar{N}'_{1}\}$, we have
	\begin{eqnarray}
		\bar{A}_{\chi^{2}} \left( \frac{r^{\epsilon}_{n}}{\eta}-\frac{M'_{1}}{\eta n^{3/2}}\right) \subseteq A_{\tilde{D}}(r^{\epsilon}_{n}) \label{eq:klchiw}
	\end{eqnarray}
	where $M'_{1}\triangleq\max \{ \bar{M}'_{0}, \bar{M}'_{1}\}$ and  $r^{\epsilon}_{n}$ satisfies \eqref{eq:sec6}.
\end{lemma}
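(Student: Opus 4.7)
The plan is to show directly that any $T$ in the chi-squared ball $\bar{A}_{\chi^{2}}\left(\frac{r^{\epsilon}_{n}}{\eta}-\frac{M'_{1}}{\eta n^{3/2}}\right)$ satisfies the strict inequality $\tilde{D}(T \| P) < r^{\epsilon}_{n}$, which places it in $A_{\tilde{D}}(r^{\epsilon}_{n})$. This is essentially the converse direction of Lemma~\ref{lemmaball}, which inscribed the $\tilde{D}$-ball inside a chi-squared ball; here we instead inscribe a chi-squared ball of slightly smaller radius inside the $\tilde{D}$-ball. The only tool needed is the Taylor-type bound~\eqref{eq:tylp2bw} that controls $|\tilde{D}(T\|P) - \eta d_{\chi^{2}}(T,P)|$ by $\bar{M}'_{1}/n^{3/2}$ on the larger chi-squared ball.

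First, I would observe that since $M'_{1} = \max\{\bar{M}'_{0},\bar{M}'_{1}\} \geq \bar{M}'_{0}$, the smaller ball is contained in the larger one,
\begin{equation*}
\bar{A}_{\chi^{2}}\!\left(\frac{r^{\epsilon}_{n}}{\eta}-\frac{M'_{1}}{\eta n^{3/2}}\right) \subseteq \bar{A}_{\chi^{2}}\!\left(\frac{r^{\epsilon}_{n}}{\eta}-\frac{\bar{M}'_{0}}{\eta n^{3/2}}\right),
\end{equation*}
so the bound~\eqref{eq:tylp2bw} is valid at every point of the smaller ball, provided $n \geq \bar{N}'_{1}$. (The condition $n \geq \bar{N}'_{0}$ is inherited through the definition of the larger ball used to derive~\eqref{eq:tylp2bw}, which is why the lemma is stated for $n \geq N'_{1} = \max\{\bar{N}'_{0},\bar{N}'_{1}\}$.)

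Next, I would take any $T$ in the smaller ball, so that $d_{\chi^{2}}(T,P) \leq \frac{r^{\epsilon}_{n}}{\eta} - \frac{M'_{1}}{\eta n^{3/2}}$, and combine this with~\eqref{eq:tylp2bw} to obtain
\begin{equation*}
\tilde{D}(T \| P) \;<\; \eta\, d_{\chi^{2}}(T,P) + \frac{\bar{M}'_{1}}{n^{3/2}} \;\leq\; r^{\epsilon}_{n} - \frac{M'_{1}}{n^{3/2}} + \frac{\bar{M}'_{1}}{n^{3/2}} \;\leq\; r^{\epsilon}_{n},
\end{equation*}
where the last step uses $\bar{M}'_{1} \leq M'_{1}$. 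This strict inequality is exactly the membership condition for $A_{\tilde{D}}(r^{\epsilon}_{n})$, which establishes the claimed inclusion.

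There is no real obstacle here; the work has already been done in setting up the two Taylor bounds~\eqref{eq:tylp2w} and~\eqref{eq:tylp2bw} and in defining $M'_{1}$ as the maximum of the two constants, so that the slack $\bar{M}'_{1}/n^{3/2}$ contributed by the Taylor remainder is absorbed by the deficit $M'_{1}/n^{3/2}$ carved out of the chi-squared radius. The only minor subtlety is being careful that the bound~\eqref{eq:tylp2bw} applies to the smaller ball via the trivial monotonicity above, rather than trying to re-derive a Taylor estimate on a set for which none was previously established.
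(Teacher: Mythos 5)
Your proof is correct and follows essentially the same argument as the paper: observe that $M'_1 \ge \bar{M}'_0$ places the smaller chi-squared ball inside the one on which \eqref{eq:tylp2bw} was established, then chain $\tilde{D}(T\|P) < \eta\, d_{\chi^2}(T,P) + \bar{M}'_1/n^{3/2} \le r^\epsilon_n - M'_1/n^{3/2} + \bar{M}'_1/n^{3/2} \le r^\epsilon_n$ using $\bar{M}'_1 \le M'_1$. Nothing is missing.
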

\begin{IEEEproof} 
	Since $M'_{1}=\max \{ \bar{M}'_{0}, \bar{M}'_{1}\}$,  we have $\bar{M}'_{0}\leq M'_{1}$. This implies that 
	\begin{eqnarray}
		\bar{A}_{\chi^{2}} \left( \frac{r^{\epsilon}_{n}}{\eta}-\frac{M'_{1}}{\eta n^{3/2}}\right)  \subseteq 	\bar{A}_{\chi^{2}} \left( \frac{r^{\epsilon}_{n}}{\eta}-\frac{\bar{M}'_{0}}{\eta n^{3/2}} \right).
	\end{eqnarray}
	Then, using  \eqref{eq:tylp2bw}, we have for $T \in 	\bar{A}_{\chi^{2}} \left( \frac{r^{\epsilon}_{n}}{\eta}-\frac{M'_{1}}{\eta n^{3/2}}\right)$ and for $n \geq N'_{1}=\max \{ \bar{N}'_{0}, \bar{N}'_{1}\}$ that
	\begin{eqnarray*}
		\tilde{D}(T \| P)  &< &   \eta d_{\chi^{2}} (T, P) + \frac{ \bar{M}'_{1} }{n^{3/2}} \\
		&\leq & \eta  \left(\frac{r^{\epsilon}_{n}}{\eta}-\frac{M'_{1}}{\eta n^{3/2}}\right) +\frac{ \bar{M}'_{1} }{n^{3/2}}\\
		&\leq& r^{\epsilon}_{n}.
	\end{eqnarray*}
	Hence, we obtain $T \in A_{\tilde{D}}(r^{\epsilon}_{n})$.
\end{IEEEproof}

From \eqref{eq:tyqorderc},  we obtain that, for $T \in A_{\tilde{D}}(r^{\epsilon}_{n})$, there exist $M_{2}'>0$ and $N_{2}' \in \Bbb N$ such that
\begin{eqnarray}
	|D (T \|  Q) -  D(P \| Q) - \sum_{i=1}^{k} (T_{i} -P_{i} ) \ln    \left( \frac{P_{i}}{ Q_{i}} \right)-  \frac{1}{2} d_{\chi^{2}} (T, P) | \leq \frac{ M_{2}' }{n^{3/2}},\quad n \geq  N_{2}'.  \label{eq:tylq2c}
\end{eqnarray}
If $r < r^{\epsilon}_{n}$, then, by definition of $r^{\epsilon}_{n}$, the type-I error  $\alpha_{n}(T^{\tilde{D}})$ exceeds $ \epsilon$. Hence such a threshold violates 
\eqref{eq:typeepsilon}. We thus assume without loss of optimality that $r\geq r^{\epsilon}_{n}$. In this case, 
\begin{eqnarray}
	\beta_{n}(T_{n}^{\tilde{D}}(r)) \triangleq	Q^{n} \left( \tilde{D}(t_{Z^{n}} \| P)   < r \right)  \geq Q^{n} \left( \tilde{D}(t_{Z^{n}} \| P)   < r^{\epsilon}_{n} \right)  . \label{eq:rgeq}
\end{eqnarray}
Together with Lemma~\ref{lemmaballw}, this implies that, for $r^{\epsilon}_{n}$ and  $n \geq N_{1}'$,  the type-II error of the divergence test $T_{n}^{\tilde{D}}(r)$ can be lower-bounded as
\begin{eqnarray}
	\beta_{n}(T_{n}^{\tilde{D}}(r))  &\geq&   Q^{n}(\tilde{D}(t_{Z^{n}} \| P)   < r^{\epsilon}_{n})  \nonumber \\
	&= &  \sum_{ z^{n}:\; t_{z^{n}} \in A_{\tilde{D}}(r^{\epsilon}_{n})} Q^{n}(z^{n}) \nonumber \\
	&=&\sum_{\tilde{P} \in \mathcal{P}_{n} \cap A_{\tilde{D}}(r^{\epsilon}_{n})}  Q^{n}(T(\tilde{P})) \nonumber  \\
	&\geq&  \sum_{\tilde{P} \in \mathcal{P}_{n} \cap \bar{A}_{\chi^{2}} \left(  \bar{r}_{n}\right) }  Q^{n}(T(\tilde{P})) \label{eq:sec3c}
\end{eqnarray}
where  
\begin{equation}
	\bar{r}_{n}=\frac{r^{\epsilon}_{n}}{\eta}-\frac{M'_{1}}{\eta n^{3/2}}. \label{eq:rbar}
\end{equation}
By  \eqref{eq:sec6},  $\bar{r}_{n}$ vanishes as $n \to \infty$. We can therefore find an  $\tilde{N}'$ such that, for $n \geq \tilde{N}'$, we have that $ \sqrt{\bar{r}_{n}} < \frac{\sqrt{V(P\|Q)}}{\tau}$ with $\tau$ defined in \eqref{eq:taub}.  To lower-bound the type-II error further, let us consider the following lemma. 
\begin{lemma}\label{type}  Consider the minimizing  probability distribution $\Gamma^{*} $ given in  \eqref{eq:minprob} in Lemma~\ref{optim}, namely
	\begin{eqnarray}
		\Gamma^{*}_{i} &=& P_{i} + \frac{\sqrt{\bar{r}_{n}} (D(P\|Q)-\alpha_{i})P_{i}}{\sqrt{V(P\|Q)} }; \quad i=1,\cdots,k \label{eq:z1}
	\end{eqnarray}
	with $\bar{r}_{n}$ given in \eqref{eq:rbar}, $\alpha_{i}$ given  in \eqref{eq:alphai}, and $n \geq \tilde{N}'$. Then, there exist $\tilde{N} \in \Bbb N$ and  a type distribution $T^{*}_{n}=(T^{*}_{n}(a_{1}), \cdots, T^{*}_{n}(a_{k}) ) \in \mathcal{P}_{n} \cap \bar{A}_{\chi^{2}} \left(  \bar{r}_{n}\right) $ such that  
	\begin{eqnarray}
		| n \ell(	\Gamma^{*}) -n\ell(T^{*}_{n})| \leq \kappa, \quad n \geq \tilde{ N} \label{eq:hfunb}
	\end{eqnarray}
	for some constant $\kappa >0$.
\end{lemma}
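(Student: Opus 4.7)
My plan hinges on noting that the minimizer $\Gamma^*$ from~\eqref{eq:z1} lies exactly on the boundary of the chi-squared ball $\bar{A}_{\chi^{2}}(\bar{r}_n)$, either because a linear functional on a convex body is minimized on the boundary, or by the direct computation $d_{\chi^{2}}(\Gamma^*,P)=\frac{\bar{r}_n}{V(P\|Q)}\sum_i(D(P\|Q)-\alpha_i)^2 P_i=\bar{r}_n$. In particular $\|\Gamma^*-P\|_{2}=O(\sqrt{\bar{r}_n})=O(1/\sqrt{n})$. Since the coordinates of $\Gamma^*$ are not generally multiples of $1/n$, I need to approximate $\Gamma^*$ by a type $T^*_n\in\mathcal{P}_n$ that (a) still lies inside $\bar{A}_{\chi^{2}}(\bar{r}_n)$ and (b) differs from $\Gamma^*$ in $\ell$ by at most $O(1/n)$. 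My plan is a two-step construction: first contract $\Gamma^*$ slightly toward $P$ to create slack inside the ball, and then round coordinate-wise to the $(1/n)$-lattice in a way that keeps the probability simplex constraint.

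For the contraction, I will set $\tilde{\Gamma}_n\triangleq(1-\gamma_n)\Gamma^*+\gamma_n P$ with $\gamma_n=C_1/\sqrt{n}$ for a constant $C_1>0$ to be chosen. Since $\tilde{\Gamma}_n-P=(1-\gamma_n)(\Gamma^*-P)$, one gets $d_{\chi^{2}}(\tilde{\Gamma}_n,P)=(1-\gamma_n)^2\bar{r}_n$, giving a slack of order $\gamma_n\bar{r}_n=\Theta(1/n^{3/2})$ below the boundary. For the rounding, I will take $T^*_n(a_i)\triangleq\lfloor n\tilde{\Gamma}_{n,i}\rfloor/n$ for $i=1,\dots,k-1$ and $T^*_n(a_k)\triangleq 1-\sum_{i<k}T^*_n(a_i)$; this yields a valid type in $\mathcal{P}_n$ (non-negativity of $T^*_n(a_k)$ is automatic for large $n$ because $\tilde{\Gamma}_{n,k}\to P_k>0$) with $\|T^*_n-\tilde{\Gamma}_n\|_\infty=O(1/n)$. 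Expanding
\begin{equation*}
d_{\chi^{2}}(T^*_n,P)-d_{\chi^{2}}(\tilde{\Gamma}_n,P)=\sum_{i=1}^{k}\frac{2(\tilde{\Gamma}_{n,i}-P_i)(T^*_n(a_i)-\tilde{\Gamma}_{n,i})+(T^*_n(a_i)-\tilde{\Gamma}_{n,i})^2}{P_i}
\end{equation*}
and bounding it using $\|\tilde{\Gamma}_n-P\|_\infty=O(1/\sqrt{n})$ together with $\|T^*_n-\tilde{\Gamma}_n\|_\infty=O(1/n)$ gives $d_{\chi^{2}}(T^*_n,P)\leq(1-\gamma_n)^2\bar{r}_n+C_2/n^{3/2}$ for some $C_2>0$. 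Choosing $C_1$ large enough that $(2\gamma_n-\gamma_n^2)\bar{r}_n\geq C_2/n^{3/2}$ for $n\geq\tilde{N}$ forces $T^*_n\in\bar{A}_{\chi^{2}}(\bar{r}_n)$.

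To conclude, the triangle inequality yields $\|T^*_n-\Gamma^*\|_1\leq\|T^*_n-\tilde{\Gamma}_n\|_1+\gamma_n\|\Gamma^*-P\|_1=O(1/n)+O(1/\sqrt{n})\cdot O(1/\sqrt{n})=O(1/n)$. Because $P,Q\in\mathcal{P}(\mathcal{Z})$ have strictly positive entries, the coefficients $\alpha_i=\ln(P_i/Q_i)$ are bounded, and linearity of $\ell$ then gives $|n\ell(\Gamma^*)-n\ell(T^*_n)|\leq n\cdot\max_i|\alpha_i|\cdot\|T^*_n-\Gamma^*\|_1=O(1)$, producing the desired constant $\kappa$. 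The main obstacle I anticipate is the feasibility check: since $\Gamma^*$ sits exactly on the boundary, naive rounding may push $T^*_n$ outside the chi-squared ball. The scaling $\gamma_n=\Theta(1/\sqrt{n})$ is forced precisely by matching the shrinkage-induced slack $\Theta(\gamma_n\bar{r}_n)=\Theta(\gamma_n/n)$ against the rounding-induced chi-squared perturbation $\Theta(1/n^{3/2})$, while simultaneously keeping the total displacement $\|T^*_n-\Gamma^*\|_1$ of order $1/n$ so that $n|\ell(\Gamma^*)-\ell(T^*_n)|$ remains bounded.
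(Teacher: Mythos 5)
Your proof is correct, and it takes a genuinely different route from the paper. The paper's proof constructs $T^*_n$ via a coordinate-by-coordinate rounding algorithm that separates the indices into three groups (those with $\Gamma^*_i = P_i$, those with $\Gamma^*_i \neq P_i$ and $i \leq k-2$, and the last two "slack" coordinates $k-1,k$ with $\Gamma^*_{k-1} < P_{k-1}$, $\Gamma^*_k > P_k$), rounds the first two groups toward $P$, and then carefully compensates on the last two coordinates by introducing an explicit offset $C'P_{k-1}$ (resp. $C'P_k$) to absorb the rounding residual while keeping the sum equal to $n$ and the chi-squared distance nonincreasing. The feasibility check \eqref{eq:z9} is done via a term-by-term comparison $\sum l_i' \leq \sum l_i$, with the "gain" on coordinate $k-1$ (or $k$) exactly cancelling the "loss" of at most $C'/n^2$ from the first group. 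In contrast, you replace this case analysis with a single uniform contraction $\tilde{\Gamma}_n = (1-\gamma_n)\Gamma^* + \gamma_n P$ with $\gamma_n = C_1/\sqrt n$, which creates a slack of order $\gamma_n\bar r_n = \Theta(1/n^{3/2})$ inside the chi-squared ball, and then you round to the $1/n$-lattice; the chi-squared perturbation from rounding is of the same order $\Theta(1/n^{3/2})$ because the cross term $|\tilde{\Gamma}_{n,i}-P_i|\cdot|T^*_n(a_i)-\tilde{\Gamma}_{n,i}| = O(n^{-1/2})\cdot O(n^{-1})$ dominates, and the constant $C_2$ in that bound is independent of $C_1$ (it depends only on $\|\Gamma^*-P\|_\infty$, $k$, and $\min_i P_i$), so taking $C_1$ large enough gives feasibility. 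Since $\|\Gamma^*-P\|_1 = O(1/\sqrt n)$, the extra displacement $\gamma_n\|\Gamma^*-P\|_1$ introduced by the contraction is $O(1/n)$, which matches the rounding displacement and still yields $n|\ell(\Gamma^*)-\ell(T^*_n)| = O(1)$. Your approach avoids the index-reordering and the two-case branching of the paper and does not need the intermediate estimate $|n\Gamma^*_i - nP_i| \geq C'+k$; the paper's method, on the other hand, avoids the extra $O(1/\sqrt n)\cdot O(1/\sqrt n)$ bookkeeping and exhibits an explicit rounding rule per coordinate. Both are valid; yours is more modular and slightly less delicate.
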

\begin{IEEEproof} See Appendix~\ref{lemmatype}.
\end{IEEEproof}

From \eqref{eq:sec3c}, the type-II error of the divergence test  can be lower-bounded as 
\begin{eqnarray}
	\beta_{n}(T_{n}^{\tilde{D}}(r))  &\geq&  \sum_{\tilde{P} \in \mathcal{P}_{n} \cap \bar{A}_{\chi^{2}} \left(  \bar{r}_{n}\right) }  Q^{n}(T(\tilde{P})) \nonumber \\
	& \geq &  Q^{n}(T(T^{*}_{n})) \nonumber \\
	& \geq&  \frac{1}{(n+1)^{|\mathcal{Z}|}} \exp \{-n D(T^{*}_{n} \| Q) \}, \quad   n \geq \tilde{N}_{1} \label{eq:prtyclc} 
\end{eqnarray}
where $\tilde{N}_{1} \triangleq \max \{N_{1}', N_{2}',  \tilde{N}\}$ and
$T^{*}_{n}  \in \mathcal{P}_{n} \cap \bar{A}_{\chi^{2}} \left(  \bar{r}_{n} \right) $ is a type distribution satisfying \eqref{eq:hfunb}. The last inequality in 
\eqref{eq:prtyclc} follows from \cite[Th. 11.1.4]{ATCB}.

We next upper-bound $D (T^{*}_{n} \|  Q) $. To this end, we use \eqref{eq:tylq2c} and \eqref{eq:hfunb}, and we recall that  $\ell(	\Gamma^{*}) =-\sqrt{V(P,Q)\bar{r}_{n}} $, to obtain
\begin{eqnarray}
	n D (T^{*}_{n} \|  Q)  & \leq &  nD(P \| Q) + n\ell(T^{*}_{n}) + \frac{1}{2} n d_{\chi^{2}} (T^{*}_{n}, P) + \frac{M'_{2}}{\sqrt{n}} \nonumber \\
	& \leq &  nD(P \| Q) + n \ell(\Gamma^{*}) +\kappa  + \frac{1}{2} n d_{\chi^{2}} (T^{*}_{n}, P) + \frac{M'_{2}}{\sqrt{n}}  \nonumber \\
	& \leq & nD(P \| Q) - n \sqrt{V(P,Q)\bar{r}_{n}} + \kappa + \frac{n\bar{r}_{n}}{2} + \frac{M'_{2}}{\sqrt{n}}, \quad n \geq \tilde{N}_{1} \label{eq:sec2}
\end{eqnarray}
where the last inequality follows because $T_n^* \in \bar{A}_{\chi^{2}} \left(  \bar{r}_{n} \right)$.
Substituting  \eqref{eq:sec2} in  \eqref{eq:prtyclc}, we get
\begin{eqnarray}
	\beta_{n}(T_{n}^{\tilde{D}}(r))  
	& \geq &\frac{1}{(n+1)^{|\mathcal{Z}|}} \exp\left[-nD(P \| Q) + n \sqrt{V(P\|Q)\bar{r}_{n}} - \kappa - \frac{n\bar{r}_{n}}{2} - \frac{M'_{2}}{\sqrt{n}}   \right], \quad n \geq \tilde{N}_{1}. 
\end{eqnarray}
Taking logarithm on both sides, we obtain
\begin{eqnarray}
	\ln \beta_{n}(T_{n}^{\tilde{D}}(r))  & \geq &-|\mathcal{Z}| \ln (n+1)  -nD(P \| Q) + n \sqrt{V(P\|Q)\bar{r}_{n}} - \kappa - \frac{n\bar{r}_{n}}{2} - \frac{M'_{2}}{\sqrt{n}}.  \label{eq:sec5}
\end{eqnarray}
Note that, by \eqref{eq:sec6} and \eqref{eq:rbar},  
\begin{eqnarray}
	\bar{r}_{n}
	&= &\frac{1}{n} Q^{-1}_{\chi^{2},k-1}(\epsilon) +  O \left( \frac{\delta_{n}}{n}  \right) -\frac{M'_{1}}{\eta n^{3/2}}.  \label{eq:sec8c}
\end{eqnarray}
Depending on whether the last two terms are of order  $O \left( \frac{1}{n^{3/2}}  \right)$ or of order $O \left( \frac{\delta_{n}}{n} \right)$, we either have 
\begin{eqnarray}
	\sqrt{\bar{r}_{n}} &=&
	\sqrt{\frac{1}{n} Q^{-1}_{\chi^{2},K-1}(\epsilon) +  O \left( \frac{1}{n^{3/2}}\right)  }  \nonumber \\
	& = & \frac{1}{\sqrt{n}}  \sqrt{  Q^{-1}_{\chi^{2},k-1}(\epsilon)} +  O \left( \frac{1}{n} \right)   \label{eq:sec8}
\end{eqnarray}
or
\begin{eqnarray}
	\sqrt{\bar{r}_{n}} &=&
	\sqrt{\frac{1}{n} Q^{-1}_{\chi^{2}}(\epsilon) +  O \left( \frac{\delta_{n}}{n}  \right) } \nonumber \\
	& = & \frac{1}{\sqrt{n}}  \sqrt{  Q^{-1}_{\chi^{2}}(\epsilon)} + O \left( \frac{\delta_{n}}{\sqrt{n}} \right).    \label{eq:sec8a}
\end{eqnarray}
It follows that \eqref{eq:sec5} can be written as 
\begin{eqnarray}
	-	\ln \beta_{n}(T_{n}^{\tilde{D}}(r))&\leq & 
	nD(P \| Q) -  \sqrt{nV(P\|Q) Q^{-1}_{\chi^{2}}(\epsilon) } +O( \max\{ \delta_{n} \sqrt{n}, \ln n \}).
\end{eqnarray}
This proves Part 2 of Theorem~\ref{divergence}. 

\section{Conclusions} \label{sec:conclusion}

 For the divergence test and for a large class of divergences, we established the first-order and the second-order terms of the type-II error given that the type-I error is upper-bounded by a given value. The divergence test does not require  the knowledge of the distribution of the alternative hypothesis, hence, it is suitable for problems where we only have access to the distribution of the null hypothesis. The class of divergences considered in this paper  includes well-known divergences such as the KL divergence and the $\alpha$-divergence, and our divergence test specializes to the Hoeffding test if the chosen divergence is the KL divergence. It is well-known that the Hoeffding test is first-order optimal in the sense that its first-order term is equal to the first-order term of the Neyman-Pearson test. However, our results demonstrate that the second-order term of the divergence test for the class of divergences considered in this paper, and therefore also of the Hoeffding test, is strictly smaller than the second-order term of the optimal Neyman-Pearson test. The question whether there exists a test  that only requires the knowledge of the distribution of the null hypothesis and that achieves the second-order term of the Hoeffding test for the some $Q$'s while it achieves a higher second-order term for the remaining  $Q$'s  is yet to be  explored.

\appendices

\section{Proof of Lemma \ref{ctaylor}} \label{claimtaylor}
For the probability distribution $T=(T_{1},\cdots, T_{k})^{T}$ and $Q \in \mathcal{P}(\mathcal{Z})$,  the function $f_{Q}(T) \triangleq D(T \| P)$  is a smooth function   of $\mathbf{T}=(T_{1}, \cdots, T_{k-1})^{T}$ given by
\begin{eqnarray}
	f_{Q}(\mathbf{T}) 
	&=& \sum_{i=1}^{k-1} T_{i}   \ln    \left( \frac{T_{i}}{ Q_{i}} \right) + \left( 1-\sum_{i=1}^{k-1} T_{i}\right)   \ln    \left( \frac{(1-\sum_{i=1}^{k-1} T_{i})}{ Q_{k}}\right). \label{eq:ts0}
\end{eqnarray}
The  partial derivatives of  $f_{Q}(\cdot)$ with respect to $T_{i}, i=1,\cdots, k-1$ up to third orders  are given by
\begin{eqnarray}
	\frac{\partial f_{Q}(\mathbf{T})}{\partial T_{i}}  &=& \ln \left( \frac{T_{i}}{ Q_{i}} \right) -\ln \left( \frac{T_{k}}{ Q_{k}} \right)  \label{eq:ts1}
\end{eqnarray}
\begin{eqnarray}
	\frac{\partial^{2} f_{Q}(\mathbf{T})}{\partial T_{i} T_{j}}  &=& 
	\begin{cases}
		\frac{1}{T_{i}} +\frac{1}{T_{k}},  \quad	& \text{if}  \quad i=j\\
		\frac{1}{T_{k}},  \quad 	&   \text{if}  \quad  i \neq j
	\end{cases}
\label{eq:ts2}
\end{eqnarray}
\begin{eqnarray}
	\frac{\partial^{3} f_{Q}(\mathbf{T})}{\partial T_{i} T_{j} T_{l}}  &=& 
	\begin{cases}
		\frac{-1}{T_{i}^{2}} +\frac{1}{T_{k}^{2}},  \quad	& \text{if}  \quad i=j=l\\
		\frac{1}{T_{k}^{2}},  \quad 	&     \text{otherwise}.
	\end{cases}
	 \label{eq:ts3}
\end{eqnarray}
Consequently,  the second-order Taylor approximation of  $f_{Q}(\mathbf{T})$  around $P$ is given by
\begin{eqnarray}
	f_{Q}(\mathbf{T}) &=&  f_{Q}(\mathbf{P})+ \sum_{i=1}^{k-1}\frac{\partial f_{Q}(\mathbf{P})}{\partial T_{i}} (T_{i} -P_{i}) +  \frac{1}{2}\sum_{i,j=1}^{k-1} \frac{\partial^{2} f_{Q}(\mathbf{P})}{\partial T_{i} T_{j}} (T_{i} -P_{i})(T_{j} -P_{j}) +R_{3}(\mathbf{T})
	\label{eq:ts4}
\end{eqnarray}
where the remainder term $R_{3}(\mathbf{T})$  in  Lagrange's form is given  by
\begin{eqnarray}
	R_{3}(\mathbf{T})= \frac{1}{3!} \sum_{i,j,l=1}^{k-1} \frac{\partial^{3} f_{Q}(\Pi)}{\partial T_{i} T_{j} T_{l}}  (T_{i} -P_{i})(T_{j} -P_{j})(T_{l} -P_{l})
	\label{eq:ts5}
\end{eqnarray}
for $\mathbf{\Pi}=\mathbf{P}+t(\mathbf{T} - \mathbf{P})$, $t \in (0,1)$. The absolute value of this term can be upper-bounded by
\begin{eqnarray}
	|R_{3}(\mathbf{T})| &=&  \left| \frac{1}{3!} \sum_{i,j,l=1}^{k-1}  \frac{\partial^{3} f_{Q}(\mathbf{\Pi})}{\partial T_{i} T_{j} T_{l}}  (T_{i} -P_{i})(T_{j} -P_{j})(T_{l} -P_{l}) \right| 
	\notag \\
	&\leq&  \frac{1}{3!} \sum_{i,j,l=1}^{k-1} \left|  \frac{\partial^{3} f_{Q}(\mathbf{\Pi})}{\partial T_{i} T_{j} T_{l}}  \right|  \left| (T_{i} -P_{i})(T_{j} -P_{j})(T_{l} -P_{l}) \right|.
	\label{eq:ts5w2}
\end{eqnarray}
When $\|\mathbf{T} -\mathbf{P} \|_{2} \rightarrow 0$, we have that $P_{i}-\delta<T_{i}<P_{i}+\delta$ for an arbitrary $\delta>0$ and $i=1,\cdots, k-1$. This implies that  $P_{k}-(k-1) \delta <T_{k}<P_{k}+(k-1) \delta$. Choosing  $\delta\triangleq\frac{1}{4(k-1)} \min_{1\leq i \leq k} P_{i}$, it follows that $P_{i}-\delta_{0}>0$ and $P_{k}-(k-1) \delta >0$. Consequently, 
\begin{eqnarray}
	\frac{1}{T_{i}^{2}} < \frac{1}{(P_{i}-\delta)^{2}}, \quad  i=1,\cdots, k-1  \label{eq:lbty}
\end{eqnarray}
and 
\begin{eqnarray}
	\frac{1}{T_{k}^{2}} < \frac{1}{(P_{k}-(k-1) \delta)^{2}}. \label{eq:lbtya}
\end{eqnarray}
Using \eqref{eq:lbty}, \eqref{eq:lbtya}, and   \eqref{eq:ts3}, one can show that, for any $\mathbf{T}$ satisfying $ 0 < \|\mathbf{T} -\mathbf{P}\|_{2} < \delta$,
\begin{eqnarray}
	\left| 	\frac{\partial^{3} f_{Q}(\mathbf{T})}{\partial T_{i} T_{j} T_{l}} \right| &\leq&  C_{1}
\end{eqnarray}
where
\begin{eqnarray}
	C_{1}\triangleq \max_{i}  \frac{1}{(P_{i}-\delta)^{2}} +\frac{1}{(P_{k}-(k-1) \delta)^{2}}.
\end{eqnarray}
Hence, from \eqref{eq:ts5w2},
\begin{eqnarray}
	|R_{3}(\mathbf{T})| &\leq & \frac{1}{3!} C_{1} \sum_{i,j,l=1}^{k-1}  \left| (T_{i} -P_{i})(T_{j} -P_{j})(T_{l} -P_{l}) \right|  \notag \\
	&\leq &	C' \| \mathbf{T} -\mathbf{P}\|_{2}^{3}
	\label{eq:ts5w1e}
\end{eqnarray}
where $C'\triangleq\frac{1}{3!} C_{1}(k-1)^{3}$ and the last inequality follows since $|T_{i} -P_{i}| \leq \|\mathbf{T} -\mathbf{P}\|_{2}$ for all $i=1,\cdots, k-1$. 

Note that $ f_{Q}(P)=D(P \|Q)$. Since $T_{k}-P_{k}=\sum_{i=1}^{k-1}(P_{i}-T_{i})$, we obtain that
\begin{eqnarray}
	\sum_{i=1}^{k-1}\frac{\partial f_{Q}(P)}{\partial T_{i}} (T_{i} -P_{i}) &=& 
	\sum_{i=1}^{k-1}  \left[ \ln \left( \frac{P_{i}}{ Q_{i}} \right) -\ln \left( \frac{P_{k}}{ Q_{k}} \right) \right]  (T_{i} -P_{i})  \notag \\
	&=& \sum_{i=1}^{k} (T_{i} -P_{i}) \ln \left( \frac{P_{i}}{ Q_{i}} \right)   
	\label{eq:ts8}
\end{eqnarray}
and 
\begin{align}
	\sum_{i,j=1}^{k-1} \frac{\partial^{2} f_{Q}(P)}{\partial T_{i} T_{j}} (T_{i} -P_{i})(T_{j} -P_{j}) &=
	\sum_{i=1}^{k-1} \left( \frac{1}{P_{i}} +\frac{1}{P_{k}}\right)  (T_{i} -P_{i})^{2}+  \sum_{ i,j=1,\cdots,k-1,i\neq j}  \frac{(T_{i} -P_{i})(T_{j} -P_{j})}{P_{k}}  \notag   \\
	&=
	\sum_{i=1}^{k-1}  \frac{(T_{i} -P_{i})^{2}}{P_{i}} +\sum_{i,j=1 }^{k-1}  \frac{(T_{i} -P_{i})(T_{j} -P_{j})}{P_{k}}  \notag \\
	&=\sum_{i}^{k}  \frac{(T_{i} -P_{i})^{2}}{P_{i}}.
	\label{eq:ts9}
\end{align}
Thus, by \eqref{eq:ts8}, \eqref{eq:ts9}, and the definition of the $\chi^{2}$-divergence,  \eqref{eq:ts4} can be written as 
\begin{eqnarray}
	f_{Q}(T) &=&  D(P \parallel Q) + \sum_{i=1}^{k} (T_{i} -P_{i} ) \ln    \left( \frac{P_{i}}{ Q_{i}} \right)+  \frac{1}{2} d_{\chi^{2}} (T, P) + O(\| \mathbf{T} -\mathbf{P}\|_{2}^{3})  
\end{eqnarray}
as $\|\mathbf{T} -\mathbf{P}\|_{2} \rightarrow 0$.
\section{Proof of Lemma \ref{optim}}
\label{lemmaoptim}
Let $\Gamma=(\Gamma_{1}, \cdots, \Gamma_{k})$.  Consider the minimization problem
\begin{eqnarray}
	\underset{\Gamma}{\text{Minimize}} \; \;& & \ell(\Gamma)=\sum_{i=1}^{k} ( \Gamma_{i}- P_{i})  \alpha_{i} \\
	\text{subject to }  & &  d_{\chi^{2}} (\Gamma, P)  \leq  \tilde{r} \\
	&& \Gamma_{i} >0, \quad i=1,\cdots,k \\
	&&  \sum_{i=1}^{k}\Gamma_{i} =1.  
\end{eqnarray}
This is equivalent to 
\begin{eqnarray}
	\underset{\Gamma}{\text{Minimize}} \; \;& & \ell(\Gamma)=\sum_{i=1}^{k} ( \Gamma_{i}- P_{i})  \alpha_{i} \\
	\text{subject to }  & &  g_{0}(\Gamma) \triangleq \sum_{i=1}^{k}  \frac{(\Gamma_{i} - P_{i})^{2}}{ P_{i}} - \tilde{r} \leq 0 \\
	&& g_{i}(\Gamma)\triangleq-\Gamma_{i} <0, \quad i=1,\cdots,k \\
	&& h_{1}(\Gamma)\triangleq \sum_{i=1}^{k}\Gamma_{i} -1=0.  
\end{eqnarray}
To find the solution of this constrained optimization problem, we consider the Karush-Kuhn-Tucker (KKT) conditions. Indeed, consider the Lagrangian function
\begin{eqnarray}
	\mathcal{L}(\Gamma,  \lambda, \mu) &=& \ell(\Gamma) + \lambda_{0} g_{0}(\Gamma) +\sum_{i=1}^{k} \lambda_{i} g_{i}(\Gamma) + \mu h_{1}(\Gamma) \notag \\
	& = & \sum_{i=1}^{k} ( \Gamma_{i}- P_{i})  \alpha_{i} + \lambda_{0} \left( \sum_{i=1}^{k}  \frac{(\Gamma_{i} - P_{i})^{2}}{ P_{i}} - \tilde{r}\right)   -\sum_{i=1}^{k} \lambda_{i} \Gamma_{i} + \mu \left( \sum_{i=1}^{k}\Gamma_{i}-1\right) 
\end{eqnarray}
where $\lambda=(\lambda_{0}, \lambda_{1},\cdots,\lambda_{k})$ and $\mu$ are KKT multipliers. 	Let $\Gamma^{*}=(\Gamma_{1}^{*}, \cdots,\Gamma_{k}^{*} )^{T}$ denote minimizing probability distribution. The  KKT conditions for the above problem are as follows:
\begin{align}
\frac{\partial \mathcal{L}}{\partial \Gamma_{i}} &=0,\quad i=1,\cdots,k \label{eq:lm1}\\
	g_{0}(\Gamma^{*}) & \leq0 \label{eq:lm3}\\
	g_{i}(\Gamma^{*})  &< 0, \quad i=1,\cdots,k \label{eq:lm3w}\\
	h_{1}(\Gamma^{*}) & =0 \label{eq:lm9}\\
	 \lambda_{i}  &\geq 0, \quad i=0, 1,\cdots,k \\
	  \lambda_{i} g_{i}(\Gamma^{*}) &= 0, \quad i=0, 1,\cdots,k
\end{align}
which  can be evaluated as 
\begin{align}
	 \alpha_{i}+ 2\lambda_{0} \left(  \frac{\Gamma^{*}_{i}}{P_{i}}-1 \right)  -\lambda_{i}+ \mu & =0,\quad i=1,\cdots,k \label{eq:kkt1}\\
	\sum_{i=1}^{k}  \frac{(\Gamma^{*}_{i} - P_{i})^{2}}{ P_{i}} - \tilde{r} & \leq0 \label{eq:kkt2} \\
	 -\Gamma^{*}_{i}  & <0, \quad i=1,\cdots,k  \label{eq:kkt3}\\
	\sum_{i=1}^{k}\Gamma^{*}_{i}-1 &=0 \label{eq:kkt4}\\
	 \lambda_{i} & \geq 0, \quad i=0, 1,\cdots,k  \label{eq:kkt5}\\
	  \lambda_{0} \left(  \sum_{i=1}^{k}  \frac{(\Gamma^{*}_{i} - P_{i})^{2}}{ P_{i}} - \tilde{r} \right) & = 0  \label{eq:kkt6}\\
	- \lambda_{i} \Gamma^{*}_{i} &=0, \quad i=1,\cdots,k. \label{eq:kkt7}
\end{align}
From \eqref{eq:kkt3} and \eqref{eq:kkt7}, we get that 
\begin{eqnarray}
	\lambda_{i}=0, \quad i=1,\cdots,k. \label{eq:kkti1}
\end{eqnarray}
We next analyze $\lambda_{0}$, considering the cases $\lambda_{0} =0$ and $\lambda_{0} \neq 0$ separately. When $\lambda_{0} =0$,   \eqref{eq:kkt7} and  \eqref{eq:kkt1} yield 
\begin{align}
	& \alpha_{i}  + \mu=0,\quad i=1,\cdots,k. \label{eq:kkt1i2}
\end{align}
Thus, $\ln \frac{P_{i}}{Q_{i}}=-\mu$ for $i=1,\cdots,k$ which implies that $P=Q$. However, we have $P \neq Q $ by the assumption of  Theorem~\ref{divergence}, so 
$\lambda_{0} =0$ is not a  valid solution.

Thus, we have  $\lambda_{0} > 0$. Then, we obtain from  \eqref{eq:kkt1}  and \eqref{eq:kkt6} that 
\begin{eqnarray}
	\Gamma^{*}_{i} & = & P_{i}-  \frac{P_{i}\mu}{2\lambda_{0}} -\frac{P_{i}\alpha_{i}}{2 \lambda_{0}}, \quad i=1,\cdots,k \label{eq:lm5} 
\end{eqnarray}
and 
\begin{eqnarray}
	\sum_{i=1}^{k}  \frac{(\Gamma^{*}_{i} - P_{i})^{2}}{ P_{i}} - \tilde{r}=0. \label{eq:kkti3} 
\end{eqnarray}
Substituting \eqref{eq:lm5}  in \eqref{eq:kkt4} yields 
\begin{eqnarray}
	\frac{-\mu}{2\lambda_{0}}-\frac{1}{2 \lambda_{0}} \sum_{i=1}^{k} P_{i} \alpha_{i}=0
\end{eqnarray}
hence
\begin{eqnarray}
	\mu=-  \sum_{i=1}^{k} P_{i} \alpha_{i}=-D(P \| Q).\label{eq:lm7}
\end{eqnarray}
Using \eqref{eq:lm5} and \eqref{eq:lm7}, \eqref{eq:kkti3} can be written as
\begin{eqnarray}
	\frac{1}{4 \lambda_{0}^{2}} \sum_{i=1}^{k} P_{i} \left( \alpha_{i}-D(P \| Q)\right) ^{2} =\tilde{r}.\label{eq:lm8}
\end{eqnarray}
Since
\begin{eqnarray}
	\sum_{i=1}^{k} P_{i} \left( \alpha_{i}-D(P \| Q)\right) ^{2}=  V(P\| Q)
\end{eqnarray}
it follows that
\begin{eqnarray}
	\lambda_{0} &=&  \frac{\sqrt{V(P\| Q)}}{2\sqrt{\tilde{r}}}. \label{eq:lm4} 
\end{eqnarray}
From \eqref{eq:lm5}, \eqref{eq:lm7},  and \eqref{eq:lm4}, we obtain that
\begin{eqnarray}
	\Gamma^{*}_{i} & = & P_{i} + \frac{\sqrt{\tilde{r}} (D(P \| Q)-\alpha_{i}) P_{i}}{\sqrt{V(P\| Q)} }, \quad i=1,\cdots,k. \label{eq:gamma}
\end{eqnarray}
Note that we have  $P_{i}>0, i=1,\cdots,k$. Hence, if $D(P \| Q)-\alpha_{j} \geq 0$, then it follows from the above equation that  $\Gamma^{*}_{j} >0$. Similarly, if $D(P \| Q)-\alpha_{j} <0$ for some $j \in \{ 1,\cdots,k \}$, then  $j \in \mathcal{I}$ and $\alpha_{j}- D(P \| Q) \in \mathcal{B}$, with $\mathcal{I}$ and $ \mathcal{B}$ defined  in \eqref{eq:tauin} and \eqref{eq:tauv}, respectively. Since  $0<\sqrt{\tilde{r}} < \frac{\sqrt{V(P\|Q)}}{\tau}$, it follows that, for any $j \in \mathcal{I}$, 
\begin{eqnarray}
	\sqrt{\tilde{r}} &<& \frac{\sqrt{V(P\|Q)}}{\tau}  \notag \\
	&	\leq  & \frac{\sqrt{V(P\|Q)}}{\alpha_{j}- D(P \| Q)} \label{eq:rlessvpq}
\end{eqnarray}
where  the last inequality follows since $\alpha_{j}- D(P \| Q) \leq \tau$ for any $j \in \mathcal{I}$ by the definition of $\tau$ in \eqref{eq:taub}. It follows from \eqref{eq:gamma} and  \eqref{eq:rlessvpq} that  $\Gamma^{*}_{j} >0$.

To summarize, the solution to \eqref{eq:kkt1}--\eqref{eq:kkt7} is given by
\begin{eqnarray}
	\lambda_{0} &=& \frac{\sqrt{V(P\| Q)}}{2\sqrt{\tilde{r}}}\\
	\lambda_{i} &=& 0,\quad i=1,\cdots,k\\
	\mu &=&-D(P \| Q)\\
	\Gamma^{*}_{i} & = & P_{i} + \frac{\sqrt{\tilde{r}} (D(P \| Q)-\alpha_{i}) P_{i}}{\sqrt{V(P\| Q)} }, \quad i=1,\cdots,k.
\end{eqnarray}
Since the objective function  $\ell(\cdot)$ of the minimization problem  and the inequality constraints $g_{i}(\cdot), i=0,1,\cdots,k$ are convex functions and the equality constraint $h_{1}(\cdot)$ is affine,  the above solution is optimal. Thus,
the minimizing probability distribution $\Gamma^{*}=(\Gamma^{*}_{1},\cdots,\Gamma^{*}_{k} )$ is given by 
\begin{eqnarray}
	\Gamma^{*}_{i} & = & P_{i} + \frac{\sqrt{\tilde{r}}\left( D(P \| Q)-\alpha_{i}\right) P_{i} }{\sqrt{V(P\|Q)} }, \quad i=1,\cdots,k
\end{eqnarray}
and the minimum value is 
\begin{eqnarray}
	\min_{\Gamma \in \bar{A}_{\chi^{2}}(\tilde{r}) } \ell(\Gamma)   =  \ell(\Gamma^{*}) =- \sqrt{V(P\| Q) \tilde{r}}.
\end{eqnarray}

\section{Proof of Lemma \ref{type}} \label{lemmatype}
Note that, by  \eqref{eq:z1}, 
the probability distribution $\Gamma^{*}=(\Gamma^{*}_{1}, \cdots, \Gamma^{*}_{k})^{T} $ is given by
\begin{eqnarray}
\Gamma^{*}_{i} &=& P_{i} + \frac{\sqrt{\bar{r}} (D(P\|Q)-\alpha_{i})P_{i}}{\sqrt{V(P\|Q)} }, \quad i=1,\cdots,k, \quad n \geq \tilde{N}'. \label{eq:z1a}
\end{eqnarray}
In the following,  we present  an algorithm that produces  a  distribution $T^{*}_{n}=(T^{*}_{n}(a_{1}), \cdots, T^{*}_{n}(a_{k}) )^{T}$ satisfying the following conditions:
\begin{enumerate}
	\item $T^{*}_{n}$ is a type distribution, i.e., for each $ i=1, \cdots,k$,  $n T^{*}_{n}(a_{i}) $ is a non-negative integer and
	\begin{eqnarray}
		\sum_{i=1}^{k}n T^{*}_{n}(a_{i}) =n. \label{eq:z8}  
	\end{eqnarray}
	\item $T^{*}_{n}$  satisfies
	\begin{eqnarray}
		d_{\chi^{2}}(T^{*}_{n}, P) \leq \bar{r}_{n}. \label{eq:z9}
	\end{eqnarray}
	\item $T^{*}_{n}$ further satisfies 
	\begin{eqnarray}
		|n \ell(\Gamma^{*}) -n\ell(T^{*}_{n})| \leq \kappa, \quad \text{ for some } \kappa >0. \label{eq:zz}
	\end{eqnarray}
\end{enumerate} 

Before proceeding to the proof, let us observe that 
\begin{eqnarray}
	d_{\chi^{2}} (\Gamma^{*}, P) &=& \sum_{i=1}^{k}  \frac{(\Gamma^{*}_{i} -P_{i})^{2}}{P_{i}} \notag  \\
	&=&\sum_{i=1}^{k} \frac{ \left(  \sqrt{\bar{r}}\left( D(P \| Q)-\alpha_{i}\right) P_{i}\right)^{2}  }{ V(P\|Q) } \frac{1}{P_{i}} \notag  \\  
	&=&	\frac{\bar{r}}{V(P\|Q)} \sum_{i=1}^{k} \left( D(P \| Q)-\alpha_{i}\right)^{2} P_{i} \notag  \\
	&=&\bar{r}_{n}. \label{eq:dchir}
\end{eqnarray}
Next, it follows from Remark~\ref{remmin} that  there exist two indices $l, l' \in \{1,\cdots,k\}, l\neq l'$ such that $D(P\|Q)-\alpha_{l} < 0$ and $D(P\|Q)-\alpha_{l'} > 0$, which implies that $\Gamma^{*}_{l}<P_{l}$ and $\Gamma^{*}_{l'}>P_{l'} $. Without loss of generality,  let us assume that $l=k-1 $ and $l'=k$, i.e.,
\begin{align}
	(D(P\|Q)-\alpha_{i})P_{i} & <0, \quad i=k-1 \label{eq:Dpq:lwr}\\
	(D(P\|Q)-\alpha_{i})P_{i}  & >0, \quad i=k \label{eq:Dpq:uppr}
\end{align}
and  also that, for some  $0\leq m \leq k-2$, 
\begin{align}
	(D(P\|Q)-\alpha_{i})P_{i} & =0, \quad i=1,\cdots,m,  \label{eq:aldefm0} \\
	(D(P\|Q)-\alpha_{i})P_{i} & \neq 0, \quad i=m+1, \cdots, k-2 \label{eq:aldefmno0} .
\end{align}
Further note that, by \eqref {eq:sec6} and \eqref{eq:rbar},  $\bar{r}_{n}= \Theta(1/n)$. 

By \eqref{eq:z1a}, we have for $i=1,\cdots,k$
\begin{eqnarray}
	n\Gamma^{*}_{i}-  nP_{i} &=& \frac{ n\sqrt{\bar{r}_{n}} (D(P\|Q)-\alpha_{i})P_{i}}{\sqrt{V(P\|Q)} } \label{eq:z1aa} 
\end{eqnarray}
which implies that 
\begin{eqnarray}
	n\Gamma^{*}_{i}- nP_{i} &=& \Theta(\sqrt{n}), \quad i=m+1,\cdots, k. \label{eq:z1aaa}
\end{eqnarray}
Next let 
\begin{align}
	C' & =\sum_{i=1}^{m} \frac{1}{P_{i}}.
\end{align}
It then follows from \eqref{eq:z1aaa}  that, for each $i=m+1,\cdots,k$, there exists an $N_{i}$ such that $|n\Gamma^{*}_{i}- nP_{i}| \geq C'+k$ for  $n \geq N_{i}$. So, for $n \geq \tilde{N} \triangleq \max \{ N_{i} : i=m+1,\cdots,k\}$, we have
\begin{eqnarray}
	|n\Gamma^{*}_{i}- nP_{i}| \geq C'+ k, \quad  i=m+1,\cdots,k.   \label{eq:z0}
\end{eqnarray}
In the following, we describe an algorithm that outputs $T^{*}_{n}$ for all $n \geq \tilde{N}$ satisfying \eqref{eq:z8}, \eqref{eq:z9} and  \eqref{eq:zz}. 
\vspace{2mm}

\noindent{\bf Algorithm} 

\noindent \textit{Step 1}: For $i=1,\cdots, m$, set 
\begin{eqnarray}
	n T^{*}_{n}(a_{i})  &=&
	\lfloor  n \Gamma^{*}_{i} \rfloor\nonumber \\
	&=&\lfloor  n P_{i} \rfloor \label{eq:y1}
\end{eqnarray}
where $\lfloor  \cdot \rfloor $ is the floor function defined as $\lfloor  x \rfloor \triangleq \max\{u \in \mathbb{Z} \; | \; u \leq x  \}$. The second step in \eqref{eq:y1} holds because \eqref{eq:aldefm0} and \eqref{eq:z1aa} imply that $ n \Gamma^{*}_{i}= n P_{i}$ for $i=1,\cdots, m$.

\noindent 	\textit{Step 2}: For $i=m+1,\cdots,k-2$, set
\begin{eqnarray}{\label{eq:y2}}
	n T^{*}_{n}(a_{i})  &=& 
	\begin{cases}
		\lceil n \Gamma^{*}_{i} \rceil, \quad	& \text{if} \; n \Gamma^{*}_{i} -nP_{i} <0 \\
		\lfloor  n \Gamma^{*}_{i} \rfloor, \quad 	&   \text{if}  \;  n \Gamma^{*}_{i} -nP_{i} >0
	\end{cases} \label{eq:lastendi}
\end{eqnarray}
where $\lceil \cdot \rceil$ is the ceiling function defined as  $\lceil   x \rceil  \triangleq \min\{u \in \mathbb{Z} \;| \; u \geq x  \}$.

\noindent	\textit{Step 3}: Let 
\begin{align}
	\gamma_{i}  & \triangleq n \Gamma^{*}_{i} - n T^{*}_{n}(a_{i}), \quad i=1, \cdots, k \label{eq:gammai}\\
	\mathbf{\gamma} & \triangleq \sum_{i=1}^{k-2} \gamma_{i}. \label{eq:gama}
\end{align}
We set  $T^{*}_{n}(a_{k-1})$ or $T^{*}_{n}(a_{k-1})$ depending on whether $\gamma \leq 0$ or $\gamma >0$: \\
\vspace{2mm}
\textit{Case 1}: If $\gamma \leq 0$, set
\begin{eqnarray}
	n T^{*}_{n}(a_{k-1})  &=& 
	\lceil n \Gamma^{*}_{k-1} + C' P_{k-1} \rceil  \label{eq:z4}\\
	n T^{*}_{n}(a_{k}) & =&
	n \Gamma^{*}_{k} + \gamma', \quad \text{where }  \gamma'=\gamma+ \gamma_{k-1} <0. \label{eq:z5}
\end{eqnarray}
\textit{Case 2}: If $\gamma >0$, set
\begin{eqnarray}
	n T^{*}_{n}(a_{k})  &=& 
	\lfloor n \Gamma^{*}_{k} -C' P_{k}\rfloor   \label{eq:z6}\\
	n T^{*}_{n}(a_{k-1}) & =&
	n \Gamma^{*}_{k-1} + \gamma', \quad \text{where }  \gamma'=\gamma+ \gamma_{k} >0. \label{eq:z7}
\end{eqnarray}	

We next demonstrate that $T^{*}_{n}$ produced by this algorithm satisfies \eqref{eq:z8}--\eqref{eq:zz}.

\subsection*{Proof of \eqref{eq:z8}:}
 When $i=1,\cdots,k-2$, since $n T^{*}_{n}(a_{i})$  is defined using the floor or ceiling function, $n T^{*}_{n}(a_{i})$ is clearly a non-negative integer.  For case 1, $n T^{*}_{n}(a_{k-1})$ is also a non-negative  integer since both $n \Gamma^{*}_{k-1}$ and $C'P_{k-1}$ are non-negative numbers and  $n T^{*}_{n}(a_{k-1})$ is defined using  the ceiling function of the sum of these two numbers.  It follows from \eqref{eq:z4} that
\begin{eqnarray}
	n \Gamma^{*}_{k-1} +C'P_{k-1} \leq 	
	n T^{*}_{n}(a_{k-1}) \leq n \Gamma^{*}_{k-1} +C'P_{k-1} +1.  \label{eq:z10}
\end{eqnarray}
Furthermore, by \eqref{eq:z5}, 
\begin{eqnarray}
	|\gamma'| &\leq& \sum_{i=1}^{k-2} |\gamma_{i}| + |\gamma_{k-1}|  \notag \\
	&\leq & k-2 + C'P_{k-1}+1 \label{eq:y0}  \notag \\
	&=& k-1 + C'P_{k-1} \label{eq:z11}
\end{eqnarray}
where the second inequality follows from the definition of $\gamma_i,\; i=1,\cdots, k-1$ and \eqref{eq:z10}. Thus, we have 
\begin{eqnarray}
	-( k-1 )- C'P_{k-1}	\leq \gamma' \leq  (k-1) + C'P_{k-1} \label{eq:z11a}.
\end{eqnarray}
Together with  \eqref{eq:z5}, this yields
\begin{eqnarray}
	n T^{*}_{n}(a_{k}) &=& 
	n \Gamma^{*}_{k} + \gamma' \notag \\
	&	\geq & 	n \Gamma^{*}_{k} -(k-1) -C'P_{k-1} \notag \\
	&> & n \Gamma^{*}_{k} -C'-k \notag \\
	&\geq  & nP_{k} \label{eq:Pklowr}\\
	&\geq& 0 \label{eq:y7}
\end{eqnarray}
where \eqref{eq:Pklowr} follows from \eqref{eq:z0}, \eqref{eq:z1a} and  \eqref{eq:Dpq:uppr}.  We further have that
\begin{align}
	n T^{*}_{n}(a_{k})	=& 	n \Gamma^{*}_{k}+ \gamma' \notag \\
	=& n \Gamma^{*}_{k} + \sum_{i=1}^{k-1} \left( n \Gamma^{*}_{i} -  	n T^{*}_{n}(a_{i}) \right) \notag \\ 
	=& n-\sum_{i=1}^{k-1} n T^{*}_{n}(a_{i}). \label{eq:y8} 
\end{align}
From \eqref{eq:y7}, we obtain that $T^{*}_{n}(a_{k})$ is non-negative. 	Since we have already shown that $n T^{*}_{n}(a_{i}), i=1,\cdots,k-1$ are non-negative integers, it follows from \eqref{eq:y8} that $n T^{*}_{n}(a_{k})$ is an integer, too.  From \eqref{eq:y8}, we also obtain that $\sum_{i=1}^{k} n T^{*}_{n}(a_{i}) =n$. This completes the proof of \eqref{eq:z8} for case 1.

We next prove \eqref{eq:z8} for case 2. From \eqref{eq:z0} and  \eqref{eq:Dpq:uppr},  we have that $n \Gamma^{*}_{k} - nP_k \geq C'+k$. It follows that $nT^{*}_{n}(a_{k}) = \lfloor n \Gamma^{*}_{k}- C'P_k \rfloor $ is non-negative. The fact that $nT^{*}_{n}(a_{k})$ is an integer follows from its definition. Furthermore, it is easy to see that $nT^{*}_{n}(a_{k-1})$ is non-negative by \eqref{eq:z7}. Finally, we have
\begin{eqnarray}
	n T^{*}_{n}(a_{k-1}) &=& 
	n \Gamma^{*}_{k-1}+ \gamma' \notag  \\
	&=& n \Gamma^{*}_{k-1}  + \sum_{i=1}^{k-2} \left( n \Gamma^{*}_{i} -  	n T^{*}_{n}(a_{i}) \right) +  n \Gamma^{*}_{k} - 	n T^{*}_{n}(a_{k}) \notag \\
	&=& n -  \sum_{i=1}^{k-2}	n T^{*}_{n}(a_{i}) - n T^{*}_{n}(a_{k}).  \label{eq:y9}
\end{eqnarray}
Since $n T^{*}_{n}(a_{i}), i=1,\cdots,k-2$ and $n T^{*}_{n}(a_{k})$ are  integers, it follows from \eqref{eq:y9} that $n T^{*}_{n}(a_{k})$ is an integer, too. From \eqref{eq:y9}, we also obtain that $\sum_{i=1}^{k} n T^{*}_{n}(a_{i}) =n$. This demonstrates that $\Gamma^{*}$ satisfies \eqref{eq:z8}.

\subsection*{Proof of \eqref{eq:z9}:} Since, by \eqref{eq:dchir}, $\bar{r}_{n}= d_{\chi^{2}}(\Gamma^{*}, P)$, \eqref{eq:z9} is equivalent to
\begin{eqnarray}
	\sum_{i=1}^{k}\frac{	( nT^{*}_{n}(a_{i})-nP_{i})^{2}}{n^{2}P_{i}}  &\leq& \sum_{i=1}^{k}\frac{	( n\Gamma^{*}_{i}-nP_{i})^{2}}{n^{2}P_{i}}.  \label{eq:dist}
\end{eqnarray}
For $i=1,\cdots,k$, let 
\begin{eqnarray}
	l_{i}'&=& \frac{	( nT^{*}_{n}(a_{i})-nP_{i})^{2}}{n^{2}P_{i}} \\
	l_{i} &=& \frac{	( n\Gamma^{*}_{i}-nP_{i})^{2}}{n^{2}P_{i}}. \label{eq:lidef} 
\end{eqnarray}
Then, \eqref{eq:dist} is equivalent to
\begin{eqnarray}
	\sum_{i=1}^{k}l_{i}' \leq \sum_{i=1}^{k}l_{i}. \label{eq:dist2}
\end{eqnarray}
We have
\begin{eqnarray}
	\sum_{i=1}^{k}l_{i}'&=&  \sum_{i=1}^{m}l_{i}'+\sum_{i=m+1}^{k-2}l_{i}'+l_{k-1}'+l_{k}'. \label{eq:s1}
\end{eqnarray}
By  \eqref{eq:y1}, the first sum can be upper-bounded as
\begin{eqnarray}
	\sum_{i=1}^{m}l_{i}'&=& \sum_{i=1}^{m} \frac{( \lfloor nP_{i}\rfloor-nP_{i})^{2}}{n^{2}P_{i}}  \notag \\
	& \leq & \sum_{i=1}^{m} \frac{1}{n^{2}P_{i}} =\frac{C'}{n^{2}}. \label{eq:u1}
\end{eqnarray} 
 From \eqref{eq:y2}, it follows that, for $i=m+1,\cdots,k-2$, 
\begin{eqnarray}
	(nT^{*}_{n}(a_{i})-nP_{i})^{2} &\leq & (n\Gamma^{*}_{i}-nP_{i})^{2}
\end{eqnarray} 
so the second sum can be upper-bounded as
\begin{eqnarray}
	\sum_{i=m+1}^{k-2}l_{i}' \leq \sum_{i=m+1}^{k-2}l_{i}. \label{eq:u2}
\end{eqnarray}

We next prove \eqref{eq:dist2} for case 1 in the algorithm. From \eqref{eq:z5}, we have
\begin{eqnarray}
	n T^{*}_{n}(a_{k}) -nP_{k} & =&
	n \Gamma^{*}_{k} -nP_{k} + \gamma'.
\end{eqnarray}
Since $\gamma' <0$, we have $n T^{*}_{n}(a_{k}) -nP_{k} < 
n \Gamma^{*}_{k} -nP_{k}$. This, together with the fact that $n T^{*}_{n}(a_{k}) -nP_{k}>0$ (cf. \eqref{eq:Pklowr}),  implies that
\begin{eqnarray}
	l_{k}'  < l_{k}. \label{eq:u3}
\end{eqnarray}
As for index $k-1$, we shall show that
\begin{eqnarray}
	l_{k-1}' \leq l_{k-1}-\frac{C'}{n^{2}} \label{eq:x4}
\end{eqnarray} 
which is equivalent to
\begin{eqnarray}
	(n \Gamma^{*}_{k-1} -nP_{k-1})^{2} -(n T^{*}_{n}(a_{k-1}) -nP_{k-1})^{2} \geq C'P_{k-1}. \label{eq:x3}
\end{eqnarray} 
Let $b=nP_{k-1}- n \Gamma^{*}_{k-1}>0$ and $s=nP_{k-1}- n T^{*}_{n}(a_{k-1})$. Then, we have  
\begin{align}
	s & = n P_{k-1} - \lceil n \Gamma^{*}_{k-1} + C'P_{k-1} \rceil \notag \\
	& > n P_{k-1} -  n \Gamma^{*}_{k-1}  - C'P_{k-1} - 1 \notag \\
	& \geq C'+k -  C'P_{k-1} - 1 \notag \\
	& >0 \label{eq:S4} 
\end{align}
where the first step follows from \eqref{eq:z4},  the third step is  due to \eqref{eq:z0} and the fact that  $n P_{k-1} > n \Gamma^{*}_{k-1}$ (by using \eqref{eq:Dpq:lwr} in \eqref{eq:z1a}), and the last step  follows since $P_{k-1} \leq 1$ and $k\geq 2$. Next note that
\begin{eqnarray}
	(n \Gamma^{*}_{k-1} -nP_{k-1})^{2} -(n T^{*}_{n}(a_{k-1}) -nP_{k-1})^{2} =b^{2}-s^{2}=(b-s)(b+s). 
\end{eqnarray} 
From \eqref{eq:z0}, it follows that $	b \geq  C'+k$, which yields
\begin{eqnarray}
s+b \geq  1 \label{eq:x1}
\end{eqnarray}
since $s>0$ by \eqref{eq:S4} and  $C'+k\geq  1$.Furthermore, \eqref{eq:z10} yields that
\begin{eqnarray}
	b-s &=& n T^{*}_{n}(a_{k-1}) -n \Gamma^{*}_{k-1} \geq C' P_{k-1}.\label{eq:x2}
\end{eqnarray}
We thus obtain \eqref{eq:x3} from \eqref{eq:x1} and \eqref{eq:x2}. It follows from \eqref{eq:u1}, \eqref{eq:u2}, \eqref{eq:u3}, and  \eqref{eq:x4} that, for case 1 in the algorithm,
\begin{eqnarray}
	\sum_{i=1}^{k}l_{i}'&=&  \sum_{i=m+1}^{k-2}l_{i}'+\sum_{i=1}^{m}l_{i}'+ l_{k-1}'+l_{k}' \notag \\
	&\leq & \sum_{i=m+1}^{k-2}l_{i} + \frac{C'}{n^{2}}+  l_{k-1}-\frac{C'}{n^{2}} +l_{k} \notag  \\
	&= & \sum_{i=m+1}^{k}l_{i}  \notag \\
	&=& \sum_{i=1}^{k}l_{i} \label{eq:a4} 
\end{eqnarray}
where the last step follows since, by  \eqref{eq:z1a},  \eqref{eq:aldefm0}, and \eqref{eq:lidef}, we have $\sum_{i=1}^{m}l_{i} =0$.

We next  argue that, by following similar steps as above, we also obtain \eqref{eq:dist2} for case 2. To this end, we use that
\begin{align}
	n \Gamma^{*}_{k-1} <	n T^{*}_{n}(a_{k-1})  < nP_{k-1} \label{eq:Tn:Pn}
\end{align}
where the first inequality follows  since	$ n T^{*}_{n}(a_{k-1}) = n \Gamma^{*}_{k-1}+ \gamma'$ and $\gamma' >0$ (by \eqref{eq:z7}), and the second inequality follows since 
\begin{align}
	nP_{k-1} & \geq n\Gamma^{*}_{k-1}+ C'+k \notag \\
	& >  n\Gamma^{*}_{k-1} + \gamma' \notag \\
	& =  n T^{*}_{n}(a_{k-1}). \label{eq:Pk3}
\end{align}
Here, the first step is due to  \eqref{eq:z0} and the fact that $nP_{k-1} \geq n\Gamma^{*}_{k-1}$ (by using \eqref{eq:Dpq:lwr} in \eqref{eq:z1a}), the second step follows   from the fact that $\gamma' < C'+k$ (by \eqref{eq:z11}), and the last step follows from  \eqref{eq:z7}. By \eqref{eq:Tn:Pn}, we have  that $|n T^{*}_{n}(a_{k-1}) -nP_{k-1} | < |n \Gamma^{*}_{k-1} -nP_{k-1} | $, which implies that
\begin{eqnarray}
	l_{k-1}'  < l_{k-1}. \label{eq:v3}
\end{eqnarray}
We finally obtain 
\begin{eqnarray}
	l_{k}' < l_{k}-\frac{C'}{n^{2}} \label{eq:v4}
\end{eqnarray} 
by letting $b=(n \Gamma^{*}_{k} -nP_{k})$, $s=(n T^{*}_{n}(a_{k}) -nP_{k})$, and  by then following the steps that led to \eqref{eq:x2}.
Thus, \eqref{eq:dist2} follows from \eqref{eq:u1}, \eqref{eq:u2}, \eqref{eq:v3}, and  \eqref{eq:v4}, and by using that $\sum_{i=1}^m l_i =0$. This demonstrate that $\Gamma^{*}$ satisfies \eqref{eq:z9}.

\subsection*{ Proof of \eqref{eq:zz}:} 
For case 1 in the algorithm, we obtain from \eqref{eq:lastendi}, \eqref{eq:z10}, \eqref{eq:z11}, and the definition of $\ell(\cdot)$ in \eqref{eq:hfun1}
\begin{eqnarray}
	|n \ell(\Gamma^{*}) -n\ell(T^{*}_{n})| & = & \left|   \sum_{i=1}^{k} (n\Gamma^{*}_{i}-nT^{*}_{n}(a_{i})) \alpha_{i}\right| \notag  \\
	& \leq & \sum_{i=1}^{k} |  n\Gamma^{*}_{i}-nT^{*}_{n}(a_{i}) |  |\alpha_{i}|  \notag \\
	& = & \sum_{i=1}^{k-2} |  n\Gamma^{*}_{i}-nT^{*}_{n}(a_{i}) |  | \alpha_{i}| + |n\Gamma^{*}_{k-1}-nT^{*}_{n}(a_{k-1}) |  |\alpha_{k-1}|  \notag \\
	& & \hspace{20mm} + |n\Gamma^{*}_{k}-nT^{*}_{n}(a_{k}) |  |\alpha_{k}| \notag \\
	& \leq&   \sum_{i=1}^{k-2} | \alpha_{i}| + |\alpha_{k-1}| (C'P_{k-1}+1) + |\alpha_{k}| (C'P_{k-1}+k-1) \notag \\
	&\triangleq & \kappa_{1}
\end{eqnarray}
where  $\kappa_{1} $ is a positive constant.

Similarly, for case 2 in the algorithm, we have that
\begin{eqnarray*}
	|n \ell(\Gamma^{*}) -n\ell(T^{*}_{n})| & = & \left|   \sum_{i=1}^{k} (n\Gamma^{*}_{i}-nT^{*}_{n}(a_{i})) \alpha_{i}\right|  \\
	& \leq & \sum_{i=1}^{k} |  n\Gamma^{*}_{i}-nT^{*}_{n}(a_{i}) |  |\alpha_{i}| \\
	& = & \sum_{i=1}^{k-2} |  n\Gamma^{*}_{i}-nT^{*}_{n}(a_{i}) |  | \alpha_{i}| + |n\Gamma^{*}_{k-1}-nT^{*}_{n}(a_{k-1}) |  |\alpha_{k-1}|  \\
	& & \hspace{20mm} + |n\Gamma^{*}_{k}-nT^{*}_{n}(a_{k}) |  |\alpha_{k}|\\
	& \leq&   \sum_{i=1}^{k-2} | \alpha_{i}| + |\alpha_{k-1}|(C'P_{k}+k-1) + |\alpha_{k}|(C'P_{k}+1)  \\
	&\triangleq & \kappa_{2} 
\end{eqnarray*} 
where  $\kappa_{2} $ is a positive constant.

By taking $\kappa = \max\{\kappa_{1}, \kappa_{2}   \}$, we obtain
\begin{eqnarray*}
	|n \ell(\Gamma^{*}) -n\ell(T^{*}_{n})| \leq \kappa
\end{eqnarray*}  
thus demonstrating that $\Gamma^{*}$ satisfies \eqref{eq:zz}.

 \section*{Acknowledgment}
 \addcontentsline{toc}{section}{Acknowledgment}
 The authors would like to thank Shun Watanabe for his helpful comments during the initial phase of this work and for pointing out relevant literature.

\bibliography{Bibliography.bib}

\begin{thebibliography}{10}
\providecommand{\url}[1]{#1}
\csname url@samestyle\endcsname
\providecommand{\newblock}{\relax}
\providecommand{\bibinfo}[2]{#2}
\providecommand{\BIBentrySTDinterwordspacing}{\spaceskip=0pt\relax}
\providecommand{\BIBentryALTinterwordstretchfactor}{4}
\providecommand{\BIBentryALTinterwordspacing}{\spaceskip=\fontdimen2\font plus
\BIBentryALTinterwordstretchfactor\fontdimen3\font minus
  \fontdimen4\font\relax}
\providecommand{\BIBforeignlanguage}[2]{{%
\expandafter\ifx\csname l@#1\endcsname\relax
\typeout{** WARNING: IEEEtran.bst: No hyphenation pattern has been}%
\typeout{** loaded for the language `#1'. Using the pattern for}%
\typeout{** the default language instead.}%
\else
\language=\csname l@#1\endcsname
\fi
#2}}
\providecommand{\BIBdecl}{\relax}
\BIBdecl

\bibitem{NP33}
J.~Neyman and E.~S. Pearson, ``On the problem of the most efficient tests of
  statistical hypotheses,'' \emph{Philosophical Transactions of the Royal
  Society of London. Series A, Containing Papers of a Mathematical or Physical
  Character}, vol. 231, no. 694-706, pp. 289--337, Feb. 1933.

\bibitem{T214}
V.~Y. Tan, ``Asymptotic estimates in information theory with non-vanishing
  error probabilities,'' \emph{Foundations and Trends{\textregistered} in
  Communications and Information Theory}, vol.~11, no. 1-2, pp. 1--184, 2014.

\bibitem{H65}
W.~Hoeffding, ``Asymptotically optimal tests for multinomial distributions,''
  \emph{The Annals of Mathematical Statistics}, pp. 369--401, Apr. 1965.

\bibitem{C98}
I.~Csisz{\'a}r, ``The method of types,'' \emph{IEEE Transactions on Information
  Theory}, vol.~44, no.~6, pp. 2505--2523, Oct. 1998.

\bibitem{Watanabe18}
S.~Watanabe, ``Second-order optimal test in composite hypothesis testing,'' in
  \emph{Proc. 2018 International Symposium on Information Theory and Its
  Applications (ISITA)}, Singapore, Oct. 2018, pp. 722--726.

\bibitem{Zeitouni92}
O.~Zeitouni, J.~Ziv, and N.~Merhav, ``When is the generalized likelihood ratio
  test optimal?'' \emph{IEEE Transactions on Information Theory}, vol.~38,
  no.~5, pp. 1597--1602, Sep. 1992.

\bibitem{Unnikrishnan11}
J.~Unnikrishnan, D.~Huang, S.~P. Meyn, A.~Surana, and V.~V. Veeravalli,
  ``Universal and composite hypothesis testing via mismatched divergence,''
  \emph{IEEE Transactions on Information Theory}, vol.~57, no.~3, pp.
  1587--1603, Mar. 2011.

\bibitem{AB216}
S.-I. Amari, \emph{Information geometry and its applications}.\hskip 1em plus
  0.5em minus 0.4em\relax Springer, 2016, vol. 194.

\bibitem{E85}
S.~Eguchi, ``A differential geometric approach to statistical inference on the
  basis of contrast functionals,'' \emph{Hiroshima mathematical journal},
  vol.~15, no.~2, pp. 341--391, 1985.

\bibitem{CZPA209}
A.~Cichocki, R.~Zdunek, A.~H. Phan, and S.-I. Amari, \emph{Nonnegative matrix
  and tensor factorizations: applications to exploratory multi-way data
  analysis and blind source separation}.\hskip 1em plus 0.5em minus 0.4em\relax
  John Wiley \& Sons, 2009.

\bibitem{CS204}
I.~Csisz{\'a}r and P.~C. Shields, \emph{Information theory and statistics: A
  tutorial}.\hskip 1em plus 0.5em minus 0.4em\relax Now Publishers Inc, 2004.

\bibitem{AC210}
S.-I. Amari and A.~Cichocki, ``Information geometry of divergence functions,''
  \emph{Bulletin of the polish academy of sciences. Technical sciences},
  vol.~58, no.~1, pp. 183--195, 2010.

\bibitem{ATCB}
T.~Cover and J.~Thomas, \emph{Elements of Information Theory}, 2nd~ed.\hskip
  1em plus 0.5em minus 0.4em\relax New York, NY, USA: Wiley-Interscience, 2006.

\bibitem{G10}
G.~L. Gilardoni, ``On {P}insker's and {V}ajda's type inequalities for
  {C}sisz{\'a}r's $ f $-divergences,'' \emph{IEEE Transactions on Information
  Theory}, vol.~56, no.~11, pp. 5377--5386, Nov. 2010.

\bibitem{C67}
I.~Csisz{\'a}r, ``On topological properties of f-divergences,'' \emph{Studia
  Math. Hungar.}, vol.~2, pp. 329--339, 1967.

\bibitem{TR84}
T.~R. Read, ``Closer asymptotic approximations for the distributions of the
  power divergence goodness-of-fit statistics,'' \emph{Ann. Inst. Statist.
  Math.}, vol.~36, pp. 59--69, Dec. 1984.

\bibitem{Y72}
J.~K. Yarnold, ``Asymptotic approximations for the probability that a sum of
  lattice random vectors lies in a convex set,'' \emph{The Annals of
  Mathematical Statistics}, pp. 1566--1580, Oct. 1972.

\bibitem{UZ09}
V.~V. Ulyanov and V.~N. Zubov, ``Refinement on the convergence of one family of
  goodness-of-fit statistics to chi-squared distribution,'' \emph{Hiroshima
  Mathematical Journal}, vol.~39, no.~1, pp. 133--161, 2009.

\end{thebibliography}
\bibliographystyle{IEEEtran}

\end{document}